\title{Sink equilibria and the attractors of learning in games}
\tikzset{ball/.style={circle, draw, fill=black,inner sep=0pt, minimum width=4pt}}
\pgfplotsset{compat = newest}
\tikzset{label/.style = {inner sep=1pt, fill=white}}
\tikzset{nd/.style={inner sep=1pt}}
\tikzset{>=Latex}
\tikzset{arc/.style = {->, semithick, >=Latex}}
\newtheorem{thm}{Theorem}[section]
\newtheorem{defn}[thm]{Definition}
\newtheorem{lem}[thm]{Lemma}
\newtheorem{conj}[thm]{Conjecture}
\newtheorem{prop}[thm]{Proposition}
\newcommand{\real}{\mathbb{R}}
\newcommand{\exptu}{\mathbb{U}}
\newcommand{\ddt}{\frac{\mathrm{d}}{\mathrm{d}t}}
\newcommand{\arc}[3][]{ #2 \xlongrightarrow{#1} #3}
\DeclareMathOperator{\intr}{int}
\DeclareMathOperator{\bd}{bd}
\DeclareMathOperator{\content}{content}
\author{Oliver Biggar and Christos Papadimitriou}
\date{}
\begin{document}

\maketitle

\begin{abstract}
\noindent Characterizing the limit behavior---that is, the attractors---of learning dynamics is one of the most fundamental open questions in game theory. In recent work on this front, it was conjectured that the attractors of the replicator dynamic are in one-to-one correspondence with the \emph{sink equilibria} of the game---the sink strongly connected components of a game's preference graph---and it was established that they do stand in at least one-to-many correspondence with them. Here, we show that the one-to-one conjecture is false. We disprove this conjecture over the course of three theorems: the first disproves a stronger form of the conjecture, while the weaker form is disproved separately in the two-player and $N$-player ($N>2$) cases. By showing how the conjecture fails, we lay out the obstacles that lie ahead for characterizing attractors of the replicator, and introduce new ideas with which to tackle them. All three counterexamples derive from an object called a \emph{local source}---a point lying within the sink equilibrium, and yet which is `locally repelling'; we prove that the absence of local sources is necessary, but not sufficient, for the one-to-one property to be true. We complement this with a sufficient condition: we introduce a local property of a sink equilibrium called \emph{pseudoconvexity}, and establish that when the sink equilibria of a two-player game are pseudoconvex then they precisely define the attractors. Pseudoconvexity generalizes the previous cases---such as zero-sum games and potential games---where this conjecture was known to hold, and reformulates these cases in terms of a simple graph property.
\end{abstract}

\section{Introduction} \label{sec:intro}

What are the possible outcomes of a collection of jointly-learning rational agents? This is a fundamental---arguably \emph{the} fundamental---problem in the study of learning in games, with consequences for machine learning, economics and evolutionary biology.  The question has received decades of study by mathematicians, economists and computer scientists \citep[see, for example,][]{zeeman_population_1980,milgrom_adaptive_1991,hofbauer_evolutionary_1996,sandholm2010population,papadimitriou_game_2019} and yet it remains broadly unanswered. Excluding some special cases (such as \emph{zero-sum} and \emph{potential games}, and slight generalizations), we do not know how to compute these outcomes~\citep{sandholm2010population}.

One reason for the lack of progress has been the historical focus on Nash equilibria as the outcome of a game~\citep{papadimitriou_game_2019,myerson1997game}. Over time this approach was found to be problematic \citep{kleinberg_beyond_2011}---not only do learning algorithms fail to converge to Nash equilibria in general games \citep{benaim_perturbations_2012,milionis_impossibility_2023}, but Nash equilibria are also intractable to compute~\citep{daskalakis_complexity_2009}.  

Non-convergence to a Nash equilibria has been traditionally viewed as a limitation of game dynamics. In a departure, \cite{papadimitriou_game_2019} argued the opposite: non-convergence of learning to Nash equilibria should be viewed as yet another limitation of the Nash equilibrium concept. Instead, the outcomes of learning---whatever they may be---should be the fundamental objects of interest in game theory. 
In other words, {\em the meaning of a game $G$ should be understood as a function $\mu_G$ mapping a prior on the space of mixed strategies to a posterior distribution on game outcomes}---where the game outcomes are the attractors of the learning dynamic. To fix a particular well-motivated and widely-used dynamic, we focus on the \emph{replicator dynamic}  \citep{taylor_evolutionary_1978,smith1973logic}---the continuous-time analog of the multiplicative weights algorithm \citep{arora_multiplicative_2012} and the flagship dynamic of evolutionary game theory \citep{hofbauer_evolutionary_2003,sandholm2010population}.  The replicator dynamic is a most natural learning behavior (``move in the direction of the utility gradient''), which is invariant under the addition of positive constants to the players' utilities, and is qualitatively invariant under positive scaling of the utilities---qualities that are \emph{sine qua non} in economic modeling. 

The replicator can have extremely complex behavior \citep{sato_chaos_2002} due to the emergence of chaos and the sensitive dependence on initial conditions, and initially it was not even known whether its attractors are finite in number. However, \cite{papadimitriou_game_2019} suggested a possible path forward: they hypothesized that the behavior of the replicator (and possibly of far more general dynamics) can be approximated by a simple combinatorial tool called here the \emph{preference graph} of the game \citep{biggar_graph_2023,biggar_attractor_2024,biggar_preference_2025}. Specifically, they suggested that the \emph{sink strongly connected components} of this graph---the `\emph{sink equilibria}' \citep{goemans_sink_2005}---are good proxies for the attractors of the replicator.

\begin{figure}
    \centering
    \begin{subfigure}{.45\textwidth}
        \centering
        \includestandalone{figs/shapley_payoff}
        \caption{}
        \label{fig:shapley payoff}
    \end{subfigure}
    \qquad
    \begin{subfigure}{.45\textwidth}
        \centering
        \includestandalone{figs/shapley_square}
        \caption{}
        \label{fig:shapley graph}
    \end{subfigure}
    \caption{The preference graph of \emph{Shapley's game} \citep{shapley_topics_1964}, and a typical payoff matrix representation. It has a unique sink equilibrium, which is the highlighted 6-cycle.}
    \label{fig:shapley}
\end{figure}

This hypothesis is a plausible and hopeful one, first two reasons. First, the preference graph is a natural and insightful graphical representation of the structure of the utilities in a normal-form game \citep{biggar_graph_2023,biggar_preference_2025} (see Figure~\ref{fig:shapley}). Second, the sink equilibria can be viewed as a combinatorial generalization of \emph{pure Nash equilibria} (PNE). Recall that, among the dozens of solution concepts in game theory, only the PNE is largely uncontroversial; its only recognized downside being that not all games have one. Both the sink equilibrium and the mixed Nash equilibrium are generalizations of the PNE which guarantee existence in all games. This observation tempts one to rethink the sink equilibrium as a novel solution concept, a generalization of the PNE in a different direction than Nash's---and possessing the advantages of being both computable in almost linear time in the description of the game, as well as seemingly compatible with learning dynamics.


Unfortunately, proving connections between the sink equilibria and the attractors of the replicator is very difficult.
The first significant progress appeared in \cite{biggar_replicator_2023}, where it was established that every attractor of the replicator must contain one or more sink equilibria, and therefore the sink equilibria are in {\em many-to-one} correspondence with the attractors---implying, for the first time, that the replicator has finitely many attractors. They also articulated the hypothesis of \cite{papadimitriou_game_2019} as a conjecture:
\begin{conj} \label{conj: content conjecture}
    Each attractor of the replicator contains \emph{exactly one} sink equilibrium, and each sink equilibrium is contained in an attractor. 
\end{conj}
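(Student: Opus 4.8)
The natural route to \emph{proving} Conjecture~\ref{conj: content conjecture} would be, for each sink equilibrium $S$, to show directly that the region of $\Delta$ that $S$ picks out---its \emph{content} $\content(S)$, roughly the union of the faces of $\Delta$ determined by the profiles in $S$---is an attractor of the replicator, and then to invoke the many-to-one result of \cite{biggar_replicator_2023} for uniqueness. One would argue that because every profile inside a sink strongly connected component has a preference edge leading to another profile of $S$, the flow near $\content(S)$ ``keeps cycling'' and cannot escape, so $\content(S)$ is forward invariant up to small perturbation and attracts a neighborhood. The flaw in this argument is that the sink-SCC property constrains the flow only along the $1$-skeleton of $\content(S)$; on higher-dimensional faces the replicator can behave in ways the preference graph does not record, and in particular $\content(S)$ can contain a point that is \emph{locally repelling}---a \emph{local source}. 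Since (as the abstract announces) the conjecture is in fact false, the plan is instead to turn this obstruction into a disproof.

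The first step is to isolate the right notion of local source---informally, a rest point $p \in \content(S)$ whose linearization is repelling in enough directions that its unstable set exits $\content(S)$---and to prove the conditional statement that a local source inside a sink equilibrium breaks the one-to-one property. The argument: if $\content(S)$ contained a local source, then trajectories starting arbitrarily close to $S$ are pushed out of every small neighborhood of $\content(S)$, so $\content(S)$ is not itself an attracting set; by \cite{biggar_replicator_2023} those escaping trajectories must accumulate on an attractor, and that attractor either already ``hosts'' another sink equilibrium---yielding an attractor with two or more sinks---or else accounts for none of the content of $S$, leaving $S$ in no attractor at all. Making this airtight requires the usual Conley-theoretic care: one must rule out the degenerate scenario in which the escaping dynamics loops back so that $\content(S)$, suitably enlarged, is still an attracting set, which is where the chain-recurrence bookkeeping has to be done honestly.

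The creative, and harder, step is to realize a local source with an explicit small game. The template is to embed a known repelling configuration---a Rock--Paper--Scissors-type cycle with an unstable interior rest point, or a repelling corner---on a proper face of $\Delta$ that happens to lie in $\content(S)$, and then to tune the remaining payoffs so that (i) the preference graph still has $S$ as a bona fide sink strongly connected component, and (ii) the trajectories pushed off that face flow into the content of a \emph{different} sink. I would expect the two-player and the $N$-player ($N>2$) constructions to be genuinely different: in two-player games the faces of $\Delta$ are products of simplices and the preference-graph combinatorics is very restrictive, so a larger strategy set and more delicate payoff engineering should be needed than in the $N$-player case, where more room is available. In each case I would first write down candidate payoffs, compute the preference graph by hand to confirm the sink structure, check the stability type of the embedded rest point by linearization, and sanity-check the global flow numerically before attempting a rigorous argument.

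The main obstacle I anticipate is exactly that last, rigorous, step in the construction: it is easy to produce payoffs that \emph{look} right numerically, but one must simultaneously control three features that pull against one another---the global combinatorics of the preference graph (to keep $S$ a sink), the local stability type of the embedded rest point (to make it a true source rather than a center or a saddle), and the global destination of the escaping trajectories. Proving the third rigorously, rather than merely observing it, typically demands an explicit trapping region or Lyapunov function for the escaping flow, and that is where the real work lies. A secondary subtlety is calibrating the definition of ``local source'' so that the necessity direction (local source $\Rightarrow$ the one-to-one property fails) is tight while still leaving room for the companion observation that the \emph{absence} of local sources is not sufficient---the precise gap that the later pseudoconvexity condition is designed to close.
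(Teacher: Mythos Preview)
Your high-level plan---identify a local source inside a sink equilibrium and exploit it to break the correspondence---matches the paper, but the logical structure you propose has a real gap, and the rigorous tools you anticipate are not the ones that work.

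The gap is in your ``first step'' conditional. You propose to prove that a local source in a sink equilibrium $S$ forces Conjecture~\ref{conj: content conjecture} to fail, and you sketch the argument: escaping trajectories accumulate on some attractor, which ``either already hosts another sink equilibrium \dots\ or else accounts for none of the content of $S$''. This dichotomy is false. The escaping trajectories can accumulate on an attractor that strictly contains $\content(S)$ yet still contains $S$ as its \emph{only} sink equilibrium, and in that case the one-to-one property survives. The paper's first example (Figure~\ref{fig:cog game}) is exactly this: a game with a unique sink equilibrium and a local source. The local source disproves only Conjecture~\ref{conj: strong content conjecture} (Lemma~\ref{local sources imply instability}); it says nothing on its own about Conjecture~\ref{conj: content conjecture}. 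The paper therefore does \emph{not} try to prove your conditional. Instead, for Conjecture~\ref{conj: content conjecture} the entire content of the argument is the explicit construction: build a game with two sink equilibria $H_a\ni a$ and $H_b\ni b$ and exhibit a chain of heteroclinic orbits from $a$ to $b$, so that any attractor containing $H_a$ must swallow $H_b$ as well. What you call the ``degenerate scenario'' to be ruled out by Conley-theoretic bookkeeping is in fact the generic scenario, and no amount of chain-recurrence analysis around a single sink equilibrium will dispose of it.

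On the rigorous step, you anticipate needing a Lyapunov function or trapping region for the escaping flow. The paper avoids both. For three players it takes a $2\times 2\times 2$ subgame with a source $a$, a sink $b$, and unit payoffs chosen so that the diagonal $x_1=x_2=x_3$ is invariant; on that one-dimensional subspace the replicator factors as $\dot w = w(1-w)(2w-1)^2\geq 0$, giving the heteroclinic chain $a\to\hat x\to b$ by inspection. For two players the paper chains together copies of a $2\times 3$ gadget (Figure~\ref{fig:2player small}) and proves the key heteroclinic orbit (Lemma~\ref{lem: counter 2}) by a topological argument: the basin-boundary set $W$ between the two PNEs is closed and invariant, interior points of $W$ have $\omega$-limits on the boundary, and by Hofbauer's result the time-average limit set must be best-response invariant---forcing $\omega(z)=\{\hat y\}$. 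Neither construction resembles your Rock--Paper--Scissors template; the local sources used are pure profiles in $2\times 2$ coordination subgames, not interior cycling equilibria.
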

This conjecture proposes a strong mathematical correspondence between sink equilibria and attractors.
Biggar and Shames also stated a strictly stronger conjecture, hypothesizing one precise form for this correspondence:
\begin{conj} \label{conj: strong content conjecture}
    The attractors of the replicator in any game are exactly the \emph{content} of the sink equilibria.
\end{conj}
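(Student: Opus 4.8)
The plan is to refute Conjecture~\ref{conj: strong content conjecture} by producing a single game $G$ with a sink equilibrium $\Sigma$ at which the equality ``attractors $=$ contents of sink equilibria'' visibly breaks. The obstruction I would build in is a \emph{local source}: a completely mixed rest point $p$ of the replicator lying in $\content(\Sigma)$ at which the linearised dynamic, restricted to the tangent space of the simplotope $X$, has every eigenvalue with strictly positive real part. The relevant structural fact is that $\content(\Sigma)$ is always forward invariant---it is a union of faces of $X$, and each face is invariant because the replicator preserves the support of a strategy---so that a local source genuinely sits inside it. A local source then defeats the strong conjecture for one of two reasons. If $p$ lies on a proper face $F$ of $X$, then $X$ has directions transverse to $\content(\Sigma)$ at $p$, and because $p$ is a source these directions are repelling, so points starting arbitrarily close to $p$ but just off $\content(\Sigma)$ are driven a bounded distance away before any nonlinear effects intervene; hence $\content(\Sigma)$ is not Lyapunov stable, so is not an attractor at all. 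If instead $p$ is completely mixed for the whole game, so $\content(\Sigma)=X$, then $X$ is of course an attractor, but only trivially; $p$ is now an interior repeller, and the attractor dual to the repeller $\{p\}$---a limit cycle, or a heteroclinic structure on the boundary of $X$---is a genuine attractor equal to the content of no sink equilibrium. Either way the conjectured equality fails.

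The real work is exhibiting a game in which such a $p$ occurs. A source rest point needs at least three active strategies for some player, so I would look for a face $F\subseteq\content(\Sigma)$ carrying an induced subgame that is a suitably perturbed, non-zero-sum rock-paper-scissors---one whose fully mixed equilibrium is spectrally a source rather than a centre or a sink. The one-parameter family of cyclic payoff patterns of the form $\begin{pmatrix} 0 & -a & b\\ b & 0 & -a\\ -a & b & 0 \end{pmatrix}$, together with its bimatrix analogues, crosses exactly such a bifurcation at $a=b$, so I would work on the repelling side and then fix the remaining payoffs so that three conditions hold simultaneously: (i) $V(F)\subseteq\Sigma$, with the preference graph induced on $V(F)$ strongly connected and without an internal sink, so that $F$ and hence $p$ really do belong to $\content(\Sigma)$; (ii) $\Sigma$ is genuinely a sink strongly connected component of the \emph{whole} preference graph, not just of its restriction to $V(F)$; and (iii) the directions at $p$ transverse to $F$ are not stabilising, so that $p$ is a source of the full replicator. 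I expect this simultaneous fit to be the main obstacle: a single small payoff matrix must satisfy a combinatorial condition---which pure profiles land in which strongly connected component---and an analytic one---the signs of the real parts of the eigenvalues at $p$---and both have to be verified by hand. The most economical realisation may well be a $3\times 3$ bimatrix game whose preference graph is strongly connected on all nine pure profiles, so that $\content(\Sigma)=X$ and the second case above applies directly.

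Granting the construction, the remaining steps are bookkeeping: (1) write down the preference graph of $G$, verify that $\Sigma$ is its unique sink strongly connected component, and describe $\content(\Sigma)$ explicitly as a subcomplex of $X$; (2) solve for $p$, form the Jacobian of the replicator at $p$ on the tangent space of $X$, and check that all of its eigenvalues have positive real part, so that $p$ is a local source; (3) apply the dichotomy above---$\content(\Sigma)$ fails Lyapunov stability, or the dual attractor is an attractor that is no sink equilibrium's content---to conclude that Conjecture~\ref{conj: strong content conjecture} is false; and (4) for concreteness, track where trajectories near $p$ actually settle, namely inside the invariant set $\content(\Sigma)$, accumulating on the lower-dimensional heteroclinic structure at its edge---which also sets up the separate and harder task, carried out later for two players and for more than two, of promoting a local source into a counterexample to the weaker one-to-one Conjecture~\ref{conj: content conjecture}.
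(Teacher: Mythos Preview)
Your instinct to plant a repelling rest point inside $\content(\Sigma)$ is the right one, but the spectral condition you demand---every eigenvalue with strictly positive real part---cannot be met at a mixed rest point in a two-player game. In log-ratio coordinates $u_i=\log(x_i/x_n)$, $v_j=\log(y_j/y_m)$ the bimatrix replicator reads $\dot u_i=(Ay)_i-(Ay)_n$ and $\dot v_j=(B^\top x)_j-(B^\top x)_m$; since $\dot u$ depends only on $v$ and $\dot v$ only on $u$, the divergence vanishes identically, and the eigenvalues at any interior rest point sum to zero. The same holds on every face $F$, because the restricted dynamic is again a bimatrix replicator for the subgame on $F$. Hence no rest point in the relative interior of a face of positive dimension can be a source: the repelling rock--paper--scissors equilibrium you invoke is a single-population phenomenon, and in the two-population $3\times3$ bimatrix game you propose the interior equilibrium is necessarily a saddle or a centre. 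Your spectral condition is satisfiable only at a \emph{pure} profile, where there are no within-face eigenvalues to constrain.

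That is exactly where the paper places its local source, with a correspondingly lighter definition (Definition~\ref{def: local source}): a point $x\in\content(H)\cap Y$ that is merely a quasi-strict Nash equilibrium of $-u$ on some subgame $Y\not\subseteq\content(H)$---only the \emph{transverse} eigenvalues into $Y$ need be positive. The counterexample to Conjecture~\ref{conj: strong content conjecture} (Figure~\ref{fig:cog game}) is then purely combinatorial: a $2\times2$ cavity $Y$ of $H$ (Definition~\ref{def:corner}) with the arc pattern of Figure~\ref{fig:corner 3}, so that the vertex $a\in H$ diagonally opposite the missing vertex is a source of $Y$. The coordination-game phase portrait on $Y$ supplies a heteroclinic orbit from $a$ to the interior mixed equilibrium $\hat x$ of $Y$, and since the support of $\hat x$ contains the missing vertex, $\hat x\notin\content(H)$; any attracting set containing $a$ must therefore contain $\hat x$. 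No spectral analysis of a high-dimensional face is required---only the orientation of four arcs in the preference graph.
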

The ``content'' of a set of profiles 
is the union of all subgames spanned by these profiles (see Definition~\ref{def: content}). Were these two conjectures proven true, the quest for the outcomes of learning in games would have come to a triumphant conclusion. Both conjectures {\em are} known to hold in some special cases of sink equilibria \citep{biggar_replicator_2023} such as \emph{attracting subgames} (those where the sink equilibrium profiles are precisely the pure profiles of a subgame). Some other classes of games, such as ordinal potential games \citep{monderer_potential_1996} and weakly acyclic games \citep{young_evolution_1993} have only PNEs as sink equilibria, so all these games also satisfy the conjectures. It was subsequently proved that these conjectures also hold for the sink equilibria of zero-sum games~\citep{biggar_attractor_2024}. 

\subsection*{Our contributions} 
We prove that both Conjecture~\ref{conj: content conjecture} and Conjecture~\ref{conj: strong content conjecture} \emph{fail} in general games. Our results rely on the key concept of a \emph{local source}, which is a mixed profile within a sink equilibrium on the boundary of the strategy space, with the property that nearby trajectories move into the interior of the strategy space, resulting in paths that are not captured by the preference graph. The existence of a local source suffices to disprove Conjecture~\ref{conj: strong content conjecture}; for Conjecture~\ref{conj: content conjecture}, an explicit local source argument suffices to disprove it for \emph{three-player games}.
Disproving Conjecture~\ref{conj: content conjecture} in the case of two-player games requires a more complex argument. In Section~\ref{sec: 2p counter} we analyze a particular $2\times 3$ game (Figure~\ref{fig:2player small}) possessing a local source, and establish additionally the existence of a trajectory between two fixed points. We use this game as a gadget, and by composing several copies we are able to arrive at a two-player game with two sink equilibria, but only one replicator attractor, disproving Conjecture~\ref{conj: content conjecture}.

Finally, we use the insights gained from the three counterexamples to identify a simple local property called \emph{pseudoconvexity}\footnote{Not related to pseudoconvex {\em functions}.}, which suffices to ensure that the content of a sink equilibrium is an attractor (so Conjecture~\ref{conj: strong content conjecture} holds in this case). Pseudoconvexity is a property of the $2\times 2$ subgames which intersect a sink equilibrium, and so it can be easily established computationally. Roughly speaking, pseudoconvexity requires that the net inflow to the sink equilibrium in the neighborhood of a $2\times 2$ subgame that appears ``concave'' is actually positive. This property generalizes all the previous classes of games where Conjecture~\ref{conj: strong content conjecture} was known to hold --- \emph{potential games} \citep{monderer_potential_1996}, games where the sink equilibrium is a subgame \citep{biggar_replicator_2023,ritzberger_evolutionary_1995} and zero-sum games \citep{biggar_attractor_2024} --- as well as some new classes, like \emph{uniform-weighted cycles} (see Section~\ref{sec: understanding pseudoconvexity} and Figure~\ref{fig:shapley}).

We believe that our main contribution is not the fact that the conjectures are false, but rather the tools and concepts through which we are able to disprove them and the new avenues for progress that they reveal. Understanding of local sources and stability conditions like pseudoconvexity are the conceptual obstacles that lie squarely in the path towards a full characterization of the attractors of the replicator dynamic. We note that while the correspondence between attractors and sink equilibria fails in the form articulated in the two conjectures, a broad correspondence between attractors and the combinatorial structure of the preference graph does seem to hold. In Section~\ref{sec: conclusions} we discuss problems inspired and left open by our work, and hypothesize how further developments of the concepts introduced here can build our understanding of learning in games.

\subsection*{Related work} \label{sec: related}

Learning in games has a long and complex history \citep{fudenberg1998theory,cesa2006prediction}. In this paper we focus on the replicator dynamic \citep{taylor_evolutionary_1978}, which is the continuous-time equivalent \citep{sorin2020replicator} of the \emph{multiplicative weights algorithm} \citep{arora_multiplicative_2012,fudenberg1998theory,freund_adaptive_1999}, the flagship method in this field.
The replicator dynamic arose from the work of Maynard Smith on evolutionary game theory \citep{smith1973logic}, being named and formalized in \citep{taylor_evolutionary_1978}. Since then, it has retained its central role in evolutionary game theory \citep{sandholm2010population,hofbauer_evolutionary_2003}. 
Finding its attractors is a central goal of the study of the replicator, both in evolutionary game theory \citep{zeeman_population_1980,sandholm2010population} and more recently in learning \citep{papadimitriou2016nash,papadimitriou_nash_2018}. The preference graph (and related \emph{best-response graph}) have been sporadically rediscovered in game theory \citep{goemans_sink_2005,candogan_flows_2011,pangallo_best_2019,papadimitriou_game_2019,biggar_graph_2023}, see \cite{biggar_preference_2025} for a survey. Sink equilibria originate with the work of \cite{goemans_sink_2005}, who used them to study the Price of Anarchy \citep{koutsoupias1999worst}. Since the work of \cite{papadimitriou_game_2019}, a line of research has developed relating the replicator and the sink equilibria. Recently, \cite{omidshafiei_-rank_2019} used the sink equilibria as an approximation of attractors for the purpose of ranking the strength of game-playing algorithms. Similarly, \cite{omidshafiei_navigating_2020} used the preference graph as a tool for representing the space of games for the purposes of learning. Later, \cite{biggar_graph_2023,biggar_replicator_2023,biggar_attractor_2024} wrote a series of papers on the preference graph and its relationship to the attractors of the replicator dynamic. Another recent work \citep{hakim2024swim} explored the problem of computing the limit distributions over sink equilibria, given a prior over profiles. Our work extends the frontier of this line of investigation.

\subsection*{Preliminaries} \label{sec: prelims}

We study normal-form games with a finite number $N$ of players and finitely many strategies $S_1,S_2,\dots,S_N$ for each player \citep{myerson1997game}. The payoffs in the game are defined by a utility function $u : \prod_{i=1}^{N}S_i \to \real^N$. 
$Z := \prod_{i=1}^{N}S_i$ is the set \emph{strategy profiles} of the game. We often treat the strategy sets $S_i$ implicitly, and define a game simply by the function $u$. We use $p_{-i}$ to denote an assignment of strategies to all players other than $i$. 
A \emph{subgame} is the game resulting from restricting each player to a subset of their strategies. 



A \emph{mixed strategy} is a distribution over a player's pure strategies, and a \emph{mixed profile} is an assignment of a mixed strategy to each player. We often refer to strategies as \emph{pure strategies} and profiles as \emph{pure profiles} to distinguish them from the mixed kinds. For a mixed profile $x$, we write $x^i$ for the distribution over player $i$'s strategies, and $x^i_s$ for the $s$-entry of player $i$'s distribution, where $s\in S_i$. Similar to $p_{-i}$, we use $x_{-i}$ to denote an assignment of mixed strategies to all players other than $i$.
The \emph{strategy space} of the game is the set of mixed profiles, which is given by $\prod_{i=1}^N \Delta_{|S_i|}$ where $\Delta_{|S_i|}$ are the simplices in $\real^{|S_i|}$. We denote this product by $X$, the mixed analog of $Z$. The utility function can be naturally extended to mixed profiles, by taking the expectation over strategies. We denote the utility of a mixed profile $x$ by $\exptu$. Given a subgame $y$, we write $X_y$ (resp. $Z_y$) to denote the set of mixed (resp. pure) profiles in $y$, where (in a slight abuse of notation) we treat $X_y$ as a subset of $X$.


A \emph{Nash equilibrium} of a game is a mixed profile $x$ where no player can increase their expected payoff by a unilateral deviation of strategy. More precisely, a Nash equilibrium is a point where for each player $i$ and strategy $s\in S_i$, $
\exptu_i(x) \geq \exptu_i(s;x_{-i})
$. A Nash equilibrium is \emph{quasi-strict} if the payoff for deviating to strategies outside the support of the equilibrium is \emph{strictly worse}. That is, if $\hat x$ is a Nash equilibrium, it is quasi-strict if $\hat x^i_s = 0$ implies that $\exptu_i(x) > \exptu_i(s;x_{-i})$.

\begin{defn} (Content of a set of pure profiles, \cite{biggar_replicator_2023}.) \label{def: content}
Let $H$ be a set of pure profiles in a game. The content of $H$, denoted $\content(H)$, is the set of all mixed strategy profiles $x$ where all pure profiles in the support of $x$ are in $H$. Equivalently, it is the union $\bigcup_y X_y$ for all subgames $y$ where $Z_y \subseteq H$.
\end{defn}



The \emph{replicator dynamic} is a continuous-time dynamical system~\citep{sandholm2010population,hofbauer_evolutionary_2003} in $X$ defined as the solution of the following ordinary differential equation:
\[
\dot x_s^i = x_s^i\left(\exptu_i(s; x_{-i}) - \sum_{r\in S_i} x_r^i \exptu_i(r; x_{-i}) \right)
\]
, where $x$ is a mixed profile, $i$ a player and $s$ a strategy.  It is known to be the limit of the multiplicative weights update algorithm \citep{arora_multiplicative_2012} when the time step goes to zero. The solution to this equation defines a \emph{flow} \citep{alongi2007recurrence} $\phi : X \times \real \to X$, which is a continuous group action of the reals on $X$. Informally, the flow $\phi(t,x)$ (commonly written $\phi^t(x)$) maps $x\in X$ to the point reached after time $t\in \real$. An \emph{orbit} or \emph{trajectory} of a point $x_0$ is the set $\{ \phi^t(x_0) : t\in \real\}$.

A central notion in dynamical systems is the concept of an attractor \citep{strogatz2018nonlinear}. First, fix a dynamic. An \emph{attracting set} under that dynamic is a set $S$ of points with these two properties: (1) there is a neighborhood $U\supset S, U\neq S$ that is \emph{invariant} under the dynamic (if the dynamic starts in $U$ it will stay there), and (2) all points of $U$ converge to $S$ under the dynamic. An attracting set is an \emph{attractor} if it is minimal, that is, no proper subset of it is attracting. We will also need the notion of the \emph{forward} and \emph{backward} limit sets. Given a point $x$, its \emph{(forward) limit set} $\omega(x)$ under a given dynamic is the set of accumulation points of the orbit starting at $x$. Formally, a point $y$ is an $\omega$-limit point of $x_0$ if there exists a sequence $t_n\in \real$ with $t_n\to\infty$ as $n\to\infty$ with $y = \lim_{t\to\infty} \phi^{t_n}(x_0)$. The forward limit set $\omega(x)$ is the set of all $\omega$-limit points. Similarly, the \emph{backward limit set} $\alpha(x)$ of $x$ is the set of all $\alpha$-limit points, defined analogously as those $y$ where there exists a sequence $t_n\in \real$ with $t_n\to\infty$ as $n\to\infty$ with $y = \lim_{t\to\infty} \phi^{-t_n}(x_0)$ \citep{alongi2007recurrence}.


The \emph{preference graph} of a game \citep{biggar_graph_2023} is a directed graph 
whose nodes are the profiles of the game. Two profiles are $i$-\emph{comparable} if they differ in the strategy of player $i$ only, and they are \emph{comparable} if they are $i$-comparable for some $i$. There is an arc between two profiles if they are comparable, and the arc is directed towards the profile where that player receives higher payoff.  The arcs in the preference graph can be given non-negative weights representing the difference in utility \citep{biggar_graph_2023,biggar_preference_2025}, which gives us the \emph{weighted preference graph}. Specifically, if $p$ and $q$ are $i$-comparable, then the arc $\arc{p}{q}$ is weighted by $u_i(q) - u_i(p) \geq 0$, which we write as $W_{q,p} := u_i(q) - u_i(p)$.
Finally, the  {\em sink  equilibria} of a game are the sink strongly connected components of the preference graph of the game. For a recent summary of results related to preference graphs see \cite{biggar_preference_2025}.

Missing proofs can be found in the appendix.

\section{Refuting the Conjectures} \label{sec: counterexamples}


Our starting point is the result of \cite{biggar_replicator_2023} that every attractor of the replicator dynamic contains some sink equilibrium, 
implying that the attractors of the replicator dynamic are finite in number. Moreover, this result is not too specific to the replicator, as it relies only on two of the dynamic's key properties: \emph{volume conservation} and \emph{subgame-independence} \citep{hofbauer1998evolutionary,sandholm2010population}. Volume conservation prohibits any asymptotically stable set from being in the interior of the strategy space. Subgame-independence asserts that (1) each subgame is invariant under the dynamic and (2) the dynamic in a subgame is unaffected by strategies outside the subgame.
We note that these properties extend to a broad range of variants of the replicator and more complex dynamics, see for example 
\citep{vlatakis-gkaragkounis_no-regret_2020}. 


\subsection{Local sources and Conjecture~\ref{conj: strong content conjecture} }

All three counterexamples are based on a feature a sink equilibrium may or may not contain, called a {\em local source.} 
Consider the game in Figure~\ref{fig:cog game}.
\begin{figure}
    \centering
    \begin{subfigure}{.45\textwidth}
        \centering
        \includestandalone{figs/coggame} 
        \caption{}
        \label{fig:cog}
    \end{subfigure}
    \qquad
    \begin{subfigure}{.45\textwidth}
        \centering
        \includestandalone{figs/corner_variant}
        \caption{}
        \label{fig:cog local source}
    \end{subfigure}
    \caption{A preference graph (Fig.~\ref{fig:cog}) whose sink equilibrium (highlighted in gray) has a local source $a$ (Def.~\ref{def: local source}). The point $\hat x$ represents the interior fixed point of the $2\times 2$ subgame in the top left, shown separately in Fig.~\ref{fig:cog local source}. The presence of the local source at $a$ implies that any replicator attractor of the game which contains $a$ must also contain $\hat x$, disproving Conjecture~\ref{conj: strong content conjecture}. }
    \label{fig:cog game}
\end{figure}
This game has a unique sink equilibrium $H$ (highlighted in gray), which has an interesting property: despite being \emph{globally} the unique sink connected component of the game, it contains a profile, namely $a$, which is a {\em source} in the upper left subgame in  Fig.~\ref{fig:cog local source}. Such profiles are called \emph{local sources} of the sink equilibrium, a concept that is a basic ingredient of all three counterexamples. 


\begin{defn}[Local sources] \label{def: local source}
    Let $x\in X$ be a mixed profile, $H$ a sink equilibrium, and $Y\subseteq X$ a subgame. Then $x$ is a \emph{local source of $H$ in $Y$} if, (1) $x\in \content(H) \cap Y$, (2) $Y\not\subseteq \content(H)$ and (3) $x$ is a quasi-strict Nash equilibrium of the negated game $-u$ restricted to $Y$.
\end{defn}

In English, this definition can be read as: a local source $x$ of $H$ in $Y$ is a mixed profile in $Y$ which is contained in the sink equilibrium $\content(H)$, yet `locally' (in the subgame $Y$) looks like a \emph{source}, in the sense that all players can strictly improve their payoff by deviating to any currently unplayed strategy in $Y$. 
To illustrate, consider Figure~\ref{fig:cog game}. Let $Y$ be the $2\times 2$ subgame in the upper left, shown separately in Fig.~\ref{fig:cog local source}. The sink equilibrium $H$ intersects $Y$, but $Y\not\subseteq \content(H)$. The profile $a$ is in $\content(H)\cap Y$ it is a source, a PNE of this $2\times 2$ subgame if the utility is negated. Hence $a$ is a local source of $H$ in $Y$. The second requirement ($Y\not\subseteq \content(H)$) is important to eliminate trivial cases; without it, any pure profile in a sink equilibrium that is not a PNE would be a local source, because it is a source in some $1\times 2$ subgame defined by an outgoing arc of the preference graph. 



Armed with the concept of a local source, we return to Conjecture~\ref{conj: strong content conjecture}. In Figure~\ref{fig:cog game}, the node $a$ is a local source of $H$ in the subgame $Y$ (Fig.~\ref{fig:cog local source}). This subgame is a $2\times 2$ coordination game, where we know a trajectory exists\footnote{Note that, when we say a trajectory exists between two fixed points $x$ and $y$, we mean there is a \emph{heteroclinic orbit} between them --- a trajectory where $\alpha(z) = \{x\}$ and $\omega(z) = \{y\}$ for some point $z$.} from each source (in this case $a$) to the Nash equilibrium in the interior of this subgame ($\hat x$) \citep{hofbauer1998evolutionary}. However $\hat x$ is not in the content of the sink equilibrium, because its support includes the profile $p$. The existence of this trajectory implies that any attracting set containing $a$ must also contain $\hat x$, contradicting Conjecture~\ref{conj: strong content conjecture}.
This example is one instance of a more general fact: any sink equilibrium possessing a local source causes Conjecture~\ref{conj: strong content conjecture} to fail. We prove this in the appendix.
\begin{lem} \label{local sources imply instability}
    If $H$ is a sink equilibrium, and $H$ has a local source, then $\content(H)$ is not an attractor. 
\end{lem}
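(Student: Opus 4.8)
The plan is to show that $\content(H)$ fails condition (2) in the definition of an attracting set: no neighborhood of $\content(H)$ converges to it, because trajectories starting arbitrarily close to the local source $x$ flee into the part of $Y$ that lies outside $\content(H)$. Let $x$ be a local source of $H$ in the subgame $Y$, so $x \in \content(H) \cap Y$, $Y \not\subseteq \content(H)$, and $x$ is a quasi-strict Nash equilibrium of $-u$ restricted to $Y$. The first step is to translate the quasi-strictness of $x$ in $-u$ into a statement about the replicator vector field: at $x$ the growth rate $\exptu_i(s;x_{-i}) - \sum_r x^i_r \exptu_i(r;x_{-i})$ is zero for strategies $s$ in the support of $x$ (since $x$ is a Nash equilibrium of $-u$ on $Y$, hence an interior rest point of the restricted replicator, i.e. a fixed point) and is \emph{strictly positive} for every strategy $s \in S_i \cap Y$ outside the support of $x$ (this is exactly the quasi-strict condition, with the inequality flipping because we negate $u$). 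So $x$ is a fixed point of the replicator restricted to $Y$ that is \emph{linearly repelling} in every direction transverse to the face $\content(\{x\})$ inside $X_Y$.

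The second step is to use subgame-independence: $Y$ is invariant under the replicator, and the dynamics on $X_Y$ are intrinsic to $Y$. Within $X_Y$, the fixed point $x$ has a nontrivial unstable manifold pointing into the interior directions of $Y$ — more precisely, by the Hartman–Grobman theorem (or the stable/unstable manifold theorem for the relevant transverse directions), there is a trajectory $z(t)$ in $X_Y$ with $\alpha(z) = \{x\}$ whose forward orbit immediately leaves the face containing $x$ and enters strata of $Y$ involving strategies outside $x$'s support. Since $x$ is a source \emph{within} $Y$ in all transverse directions, every neighborhood of $x$ in $X_Y$ contains points whose forward orbit (under the $Y$-dynamics, hence under the full dynamics) moves a definite distance away from the face $\content(\{x\})$; in particular these orbits enter $X_Y \setminus \content(H)$, which is nonempty and in fact has nonempty interior in $X_Y$ because $Y \not\subseteq \content(H)$ means some pure profile of $Y$ is not in $H$.

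The third step assembles the contradiction. Suppose for contradiction $\content(H)$ were an attractor, with invariant neighborhood $U \supsetneq \content(H)$ all of whose points converge to $\content(H)$. Pick points $y_n \to x$ in $X_Y$ with $y_n \in U$ (possible since $x \in \content(H) \subseteq U$ and $U$ is a neighborhood) and $y_n \notin \content(\{x\})$. By the repelling behavior above, the forward orbit of each $y_n$ reaches a point at distance $\ge \delta$ from $\content(\{x\})$ for a fixed $\delta > 0$; taking a limit of such escape points (using compactness of $X$ and continuity of the flow, plus the fact that $y_n \to x$ forces the escape time to $\infty$, a standard continuity-of-escape-time argument) yields a point $w$ on the forward limit set of some trajectory through $U$ with $w \in X_Y$, $w$ at distance $\ge \delta$ from $\content(\{x\})$. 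But this orbit must converge to $\content(H)$, so $w \in \content(H)$; since the orbit lies in $X_Y$ by invariance, $w \in \content(H) \cap X_Y$. This does not yet contradict anything, so the argument must be sharpened: the right escape statement is that the orbit enters $X_Y \setminus \content(H)$ and, because $x$ is a quasi-strict source in $Y$, it does so \emph{without returning} — one shows the unstable directions at $x$ correspond to \emph{increasing} the weight on strategies outside $H$, and the replicator's monotonicity of such coordinates near $x$ (each $\dot x^i_s / x^i_s > 0$ for $s \notin \supp x$, $s \in Y$, in a neighborhood) forces those coordinates to stay bounded below along the forward orbit, so the orbit is trapped in $X_Y \setminus \content(H)$ and cannot converge to $\content(H)$. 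This contradicts that every point of $U$ converges to $\content(H)$.

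I expect the main obstacle to be the final step: carefully controlling the forward orbits of the perturbed points $y_n$ to guarantee they remain outside $\content(H)$ (not merely leave it momentarily), rather than being recaptured by some other part of $\content(H)$. The clean way to do this is to identify an explicit coordinate — a weight $x^i_s$ on a strategy $s \in S_i$ that is outside $x$'s support but inside $Y$, such that $Z_{\{s\} \times (\text{rest})} \not\subseteq H$ — and show that near $x$ this coordinate is strictly increasing along trajectories (immediate from the quasi-strict/source condition and the replicator equation), hence bounded away from $0$ on the forward orbit, which keeps the orbit in $X \setminus \content(H)$ since $\content(H)$ requires all support profiles to lie in $H$. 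Volume conservation is not strictly needed for this lemma (it was needed for the "attractors contain sink equilibria" direction), but it provides useful intuition for why the interior fixed point $\hat x$ of $Y$ must be reached. The rest (translating quasi-strictness to the vector field, invoking subgame-independence, invoking the unstable manifold theorem) is routine.
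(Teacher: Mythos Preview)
Your steps 1 and 2 are the paper's route: compute the transversal growth rates $\exptu_i(s;x_{-i}) - \exptu_i(x)$ for $s$ in $Y$ but outside $\supp x$, observe they are all strictly positive by quasi-strictness of $x$ in $-u$, and invoke the stable manifold theorem (for the time-reversed flow) to produce a point $z$ with $\alpha(z)=\{x\}$. The paper packages this as a separate lemma and is careful to land $z$ in the full interior $\intr X_Y$ (all transverse eigenvalues are unstable, so the unstable manifold is tangent to the span of \emph{all} transverse directions and hence meets $\intr X_Y$); your phrasing ``enters strata of $Y$ involving strategies outside $x$'s support'' is weaker and should be tightened to $z\in\intr X_Y$.

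Step 3 has a genuine gap, and it comes from missing one clean observation. You note that $X_Y\setminus\content(H)$ has nonempty interior in $X_Y$; in fact \emph{all} of $\intr X_Y$ lies outside $\content(H)$. Any point of $\intr X_Y$ has every pure profile of $Y$ in its support, and since $Y\not\subseteq\content(H)$ at least one such pure profile is not in $H$, so that point cannot lie in $\content(H)$. Thus the $z\in\intr X_Y$ produced in step~2 is already at positive distance $d>0$ from the closed set $\content(H)$, while $\phi^{-T}(z)\to x\in\content(H)$. This immediately violates Lyapunov stability (equivalently, any trapping-region characterization) of $\content(H)$: for every $\delta>0$ there is a point within $\delta$ of $\content(H)$ whose forward orbit reaches distance $d$. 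No control of the forward limit of $z$ is required.

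Your proposed ``sharpening'' via a monotone coordinate does not work as written. The inequality $\dot x^i_s/x^i_s>0$ holds only in a neighborhood of $x$; once the orbit leaves that neighborhood the sign can flip, so you cannot conclude that $x^i_s$ stays bounded away from $0$ along the entire forward orbit. The attempt to rule out ``recapture'' by $\content(H)$ is therefore not established---and, given the observation above, it is also unnecessary.
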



\subsection{Games with (at least) three players} \label{sec: 3p counter}

\begin{figure}
    \centering
    \begin{subfigure}{.45\textwidth}
        \centering
        \includestandalone{figs/3player_counter}
        \caption{}
        \label{fig:big_3player}
    \end{subfigure}
    \qquad
    \begin{subfigure}{.45\textwidth}
        \centering
        \includestandalone{figs/3player_counter_small}
        \caption{}
        \label{fig:small_3player}
    \end{subfigure}
    \caption{A 3-player counterexample. We show that a replicator trajectory exists from $a$ to $\hat x$ (a fully-mixed Nash equilibrium of the subgame in \ref{fig:small_3player}) and also from $\hat x$ to $b$, implying that any attractor containing $a$ must also contain $b$.}
    \label{fig:3player counterexample}
\end{figure}

Consider a game possessing the preference graph in Figure~\ref{fig:big_3player}. This game has three players\footnote{Note that this counterexample also serves as a counterexample for $N$-player games with $N\geq 3$, because we can embed a copy of this game in a game with more players.} with 3, 3 and 2 strategies respectively. Its preference graph has two sink equilibria $H_a$ and $H_b$, which we have each highlighted in gray. We have named two nodes $a$ and $b$, the former in $H_a$ and the latter in $H_b$. The critical features of the example lie in one $2\times2\times 2$ subgame, which we isolate in Figure~\ref{fig:small_3player}. The remainder of the graph serves to ensure that $H_a$ is a sink equilibrium of the game.

In this subgame, node $a$ is a source and node $b$ is a sink. That is, $a$ is again a local source of $H_a$! Because $b$ is in a different sink equilibrium to $a$, there are necessarily no paths from $a$ to $b$ in this subgraph. However, it is not possible for an attractor to contain $a$ and not $b$.
\begin{lem}
    In Figure~\ref{fig:3player counterexample}, any attracting set of the replicator containing $a$ must contain $b$.
\end{lem}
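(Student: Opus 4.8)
The plan is to exhibit an explicit heteroclinic chain $a \rightsquigarrow \hat x \rightsquigarrow b$ inside the isolated $2\times 2\times 2$ subgame of Figure~\ref{fig:small_3player}, and then invoke the topological transitivity of attracting sets: an attracting set is closed and invariant, so if it contains $a$ and there is an orbit whose $\alpha$-limit is $a$, it must contain the whole orbit and hence its $\omega$-limit, and iterating this along the chain forces $b$ into the set. The first step is to locate the relevant fixed point $\hat x$: in the $2\times 2\times 2$ subgame, $a$ is a source (a quasi-strict PNE of $-u$) and $b$ is a sink (a quasi-strict PNE of $u$), so by the index/parity structure of the replicator on a product of three segments there must be at least one fully-mixed fixed point, which I would argue is the unique interior rest point $\hat x$; by subgame-independence the dynamics on this subgame are self-contained, so I may reason entirely within it.

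The core of the argument is the two heteroclinic connections. For $a \rightsquigarrow \hat x$: since $a$ is a local source of $H_a$ in this subgame, every pure strategy currently unplayed at $a$ strictly improves payoff, so $a$ is a hyperbolic repeller within the subgame, and its unstable manifold is the entire (relative) interior-facing neighborhood. I would track the unstable manifold of $a$ and show it must accumulate somewhere that is invariant and bounded away from the other vertices and edges of the cube (using the arc directions of the preference graph on the boundary faces to rule out the boundary as a destination), leaving $\hat x$ — or a set containing it — as the only possibility; identifying the limit precisely as $\{\hat x\}$ is where I would lean on the subgame being small enough that the interior flow is well understood (e.g.\ by exhibiting a Lyapunov-type function, or by direct linearization and a monotonicity/quadrant argument on the $2\times 2\times 2$ cube). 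For $\hat x \rightsquigarrow b$: symmetrically, $\hat x$ has a nonempty stable manifold (again by the index count, the interior fixed point cannot be a pure repeller given that $a$ already absorbs one unstable direction count), and following the stable manifold backward, or equivalently running an orbit that leaves a neighborhood of $\hat x$ along an unstable direction, the preference-graph arc directions on the boundary funnel the trajectory toward the sink vertex $b$.

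Once both connections are established, the conclusion is immediate: let $A$ be an attracting set with $a\in A$. Attracting sets are closed and invariant, so the orbit $z$ with $\alpha(z)=\{a\}$ lies in $A$ (as $a\in A$ and $A$ is closed and forward/backward invariant), hence $\omega(z)=\{\hat x\}\subseteq A$; then the orbit $z'$ with $\alpha(z')=\{\hat x\}$, $\omega(z')=\{b\}$ likewise lies in $A$, so $b\in A$. The main obstacle I anticipate is the middle link, pinning down that the unstable manifold of $a$ actually reaches $\hat x$ and that $\hat x$'s stable manifold is reached from a neighborhood that the flow out of $a$ enters — in dimension three the interior fixed point need not be globally attracting within the cube, so I would either need the specific payoff numbers of Figure~\ref{fig:small_3player} to admit a clean Lyapunov function, or I would argue more carefully using the boundary arc structure (the six $2\times 2$ faces of the cube, each a $2\times 2$ game whose flow is completely classified) to confine the relevant trajectories, together with a dimension count on stable/unstable manifolds. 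A secondary, more routine point is verifying that the rest of the large preference graph in Figure~\ref{fig:big_3player} genuinely makes $H_a$ a sink SCC so that $a$ being a local source is not vacuous, but that is a finite graph check.
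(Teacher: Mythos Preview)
Your high-level plan is the same as the paper's: exhibit a heteroclinic chain $a \rightsquigarrow \hat x \rightsquigarrow b$ inside the $2\times 2\times 2$ subgame and then propagate membership along the chain. What you are missing is the one idea that makes the middle step work.

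The figure specifies only a \emph{preference graph}; you are free to choose the numerical payoffs. The paper sets the payoff to $1$ for every player at every sink profile of the subgame and $0$ at every source profile. This makes the replicator equations invariant under permuting the three players, so the diagonal $\{x_1=x_2=x_3\}$ is an invariant one-dimensional submanifold containing $a$, $\hat x$, and $b$. On it the dynamics reduce to $\dot w = w(1-w)(2w-1)^2 \ge 0$, which has fixed points exactly at $0,\tfrac12,1$ and immediately yields the two heteroclinic orbits. No manifold-chasing, no index arguments, no Lyapunov function.

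Your proposed route---index/parity for existence of $\hat x$, then three-dimensional stable/unstable-manifold tracking constrained by the six boundary faces---runs into real trouble here. First, the cube has four sinks, not one, so the unstable manifold of $a$ could a priori flow to any of the three sinks adjacent to $a$ rather than through $\hat x$; nothing in the preference graph alone forces a trajectory to the interior fixed point. Second, for the symmetric payoffs (which you would need for the explicit computation) $\hat x=(\tfrac12,\tfrac12,\tfrac12)$ is completely degenerate: the Jacobian vanishes there, and in fact there is an entire one-parameter family of interior fixed points along each coordinate half-line through $\hat x$, so your appeal to ``the unique interior rest point'' and to hyperbolic stable/unstable manifolds breaks down. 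The diagonal reduction is what rescues the argument from all of this.

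One smaller gap: in your closing step, closedness and invariance of $A$ alone do \emph{not} force an orbit with $\alpha$-limit in $A$ to lie in $A$ (a saddle singleton is closed and invariant). You need the trapping neighborhood $U$: since $\alpha(z)=\{a\}\subset A\subset U$, some $\phi^{-T}(z)$ lies in $U$; forward invariance of $U$ then gives $z\in U$ and $\omega(z)\subseteq A$.
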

\begin{proof}
    We will focus on the $2\times 2\times 2$ subgame in Figure~\ref{fig:small_3player} (recall that the replicator is subgame-independent). We define the payoffs in this subgame as one for all players in every sink profile, and zero for all players in every source profile.
    Each player has exactly two pure strategies, and we will represent their mixed strategies by the variables $x_1,x_2,x_3$. Expressed in these variables, we assume w.l.o.g. that $a=(0,0,0)$ and $b=(1,1,1)$. This subgame also contains a Nash equilibrium at $\hat x = (0.5,0.5,0.5)$, which is the only fixed point of the replicator in the interior of the strategy space. Because of the unit payoffs, the replicator equation is given by:
    \[
    \dot x_1 = x_1(1-x_1)\left ((1-x_2)(1-x_3) - x_2(1-x_3) - (1-x_2)x_3 + x_2x_3\right )
    \]
    The equations for $\dot x_2$ and $\dot x_3$ follow by symmetry. Next, consider the one-dimensional diagonal subspace which contains $a$, $b$ and $\hat x$, defined by $x_1=x_2=x_3$. Because of symmetry, this subspace is closed---if we start in this subspace, we remain there. Hence, we can express the replicator on this subspace by a one-dimensional dynamical system, with $w = x_1=x_2=x_3$:
    \[
    \dot w = w(1-w)\left ((1-w)^2 - w(1-w) - (1-w)w + w^2\right )
    \]
    This factorizes to $\dot w = w(1-w)(2w-1)^2$. This equation is always non-negative, with fixed points at 0 ($a$), $0.5$ ($\hat x$) and 1 ($b$). Thus there is a trajectory from any neighborhood of $a$ to $\hat x$, and similarly there is a trajectory from any neighborhood of $\hat x$ to $b$. Any attracting set containing $a$ must also contain $\hat x$, and hence also $b$.
\end{proof}

\subsection{Games with two players} \label{sec: 2p counter}

The previous technique does not work in two-player games. The reason is that every two-player game containing a source and a sink necessarily has a path in the preference graph from the source to the sink. In the example above, our construction used a subgame (Figure~\ref{fig:small_3player}) containing both a source ($a$) and sink ($b$) but with no path between them.

It turns out that the conjecture still fails in two-player games, though the argument is more subtle. We will use the graph in Figure~\ref{fig:2player counterexample}.
\begin{figure}
    \centering
    \begin{subfigure}{.45\textwidth}
        \centering
        \includestandalone{figs/2player_counter}
        \caption{}
        \label{fig:2player big}
    \end{subfigure}
    \quad
    \begin{subfigure}{.45\textwidth}
        \centering
        \includestandalone{figs/2player_small}
        \caption{}
        \label{fig:2player small}
    \end{subfigure}
    \caption{A two-player counterexample. We show that replicator trajectories exist from $a$ to $\hat x$ to $\hat y$ to $c$ to $b$, meaning that $b$ must be in any attractor containing $a$.}
    \label{fig:2player counterexample}
\end{figure}
Like before, we have a game containing two nodes $a$ and $b$, where there are no paths from $a$ to $b$ in the graph. The node $a$ is contained in a sink equilibrium $H_a$ and $b$ is contained in a different sink equilibrium $H_b$. We have highlighted these in gray. For clarity we have omitted some arcs from the graph, where they are implied by the fact that $H_a$ and $H_b$ are sink equilibria.

Despite the apparent complexity, the key step can be reduced to reasoning about local sources in a much smaller subgame, shown in Figure~\ref{fig:2player small}. The points $\hat x$, $\hat y$ and $\hat z$ are fixed points of the replicator dynamic in $2\times 2$ subgames --- that is, they are Nash equilibria in their respective subgames. First, consider the $2\times 2$ subgame containing $a$ and $\hat x$. The profile $a$ is a local source of this subgame, and just as we argued in the previous subsection, there is a trajectory from $a$ to $\hat x$, implying that any attracting set contain $a$ must contain $\hat x$. We will now show that, for some choices of arc weights, $\hat x$ is itself a local source of the $2\times 3$ subgame depicted in Fig.~\ref{fig:2player small}. 

\begin{lem} \label{lem: counter 2}
Let $u$ be a (generic) $2\times 3$ game whose preference graph is isomorphic to that in Figure~\ref{fig:2player small}. There exist two fixed points $\hat x$ and $\hat y$ whose support is $2\times 2$, and exactly one of these is a Nash equilibrium. There is a trajectory between these points, beginning at the non-Nash fixed point and ending at the Nash equilibrium.
\end{lem}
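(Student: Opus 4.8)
The plan is to work entirely inside the $2\times 3$ game $Y$ of Figure~\ref{fig:2player small} (legitimate by subgame-independence), using coordinates $p\in[0,1]$ for player~1's mixed strategy and $(q_1,q_2,q_3)\in\Delta_2$ for player~2's. The preference graph fixes the sign of every utility difference $W_{\cdot,\cdot}$ while leaving its magnitude free — this is the ``some choices of arc weights'' latitude. First I would locate the $2\times 2$-support fixed points. For each pair of columns whose induced $2\times2$ subgame is ``mixed'' (orientation a directed $4$-cycle or the two-source/two-sink pattern, which the figure exhibits for every pair, since none of the three columns is dominated), player~2's indifference between those two columns is a single linear equation in $p$, and player~1's indifference is a single linear equation in the ratio of the two column weights; the arc directions force the roots to lie, respectively, in $(0,1)$ and in the relative interior of the corresponding simplex edge. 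This yields $\hat x,\hat y$ (and a third such point $\hat z$), each on a distinct facet $q_k=0$. A facet-$q_k=0$ fixed point is a Nash equilibrium of $Y$ exactly when player~2 does not strictly gain by switching to the absent column $k$, i.e.\ when a single inequality of the shape $p^{*}W' - (1-p^{*})W'' \gtrless 0$ has the right sense; I would fix the weight magnitudes so that it \emph{fails} at $\hat x$ — making $\hat x$ a local source of $Y$ in the sense of Definition~\ref{def: local source} — and \emph{holds} at $\hat y$ — making $\hat y$ a Nash equilibrium — and note that this is an open condition on weights, which is the genericity in the statement. This disposes of the first two assertions.

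For the heteroclinic orbit I would next read off the linearizations. The $2\times2$ subgame through $a$ and $\hat x$ is a coordination game — that is precisely what ``$a$ is a local source of that subgame while $\hat x$ is its interior fixed point'' forces — so within that facet $\hat x$ is a saddle, with a one-dimensional stable manifold (the orbit entering from $a$) and a one-dimensional in-facet unstable direction; the local-source property contributes a third, transverse eigendirection that is \emph{unstable} (player~2 strictly prefers the absent column), so $\hat x$ carries a two-dimensional unstable manifold $W^{u}(\hat x)$ that emanates into $\intr Y$. Dually, at the Nash point $\hat y$ the transverse direction is \emph{stable}, and inside its own facet $\hat y$ is again a saddle, so $\hat y$ carries a two-dimensional stable manifold $W^{s}(\hat y)$ reaching into $\intr Y$. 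Since a generic $2\times 3$ game has no completely-mixed fixed point, every fixed point of $Y$ lies on $\partial Y$ and is accounted for (the six corners, $\hat x,\hat y,\hat z$, and any edge fixed points). The remaining task is the global one: show that a sheet of $W^{u}(\hat x)$ with $q_3>0$ actually reaches $W^{s}(\hat y)$, i.e.\ $W^{u}(\hat x)\cap W^{s}(\hat y)\neq\emptyset$.

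The hard part is exactly this global connection — two surfaces of complementary dimension in a $3$-manifold need not meet, so one must genuinely control the flow, which is the subtlety the surrounding text warns about. The route I would take is to choose the weight magnitudes so that the system admits a $2$-dimensional invariant surface (or even a $1$-dimensional invariant curve, mirroring the diagonal trick used in the three-player proof of Section~\ref{sec: 3p counter}) passing through both $\hat x$ and $\hat y$; on it the flow is planar, and one closes the argument with the Poincar\'e--Bendixson theorem, an index count over the (fully enumerated) fixed points, and a Dulac-type or monotonicity argument — exploiting the now-fixed signs of the weights — to exclude periodic orbits, which forces the orbit leaving $\hat x$ to have $\omega$-limit exactly $\{\hat y\}$. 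Lacking such a reduction, the fallback is to follow $W^{u}(\hat x)$ directly: show that its interior sheet must cross the facet $q_1=0$ transversally and land inside the region foliated by $W^{s}(\hat y)$, ruling out escape to any other boundary fixed point by tracking a coordinate that is monotone along the relevant portion of the flow. Either way, one obtains the heteroclinic orbit asserted by the lemma, running from the non-Nash fixed point $\hat x$ to the Nash equilibrium $\hat y$.
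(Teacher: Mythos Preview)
Your handling of the first claim is roughly right but slightly mis-aimed: the lemma asserts that \emph{every} generic game with this preference graph has exactly one of $\hat x,\hat y$ Nash, not merely that some open set of weights does. The paper proves this directly by comparing the row-player coordinates $\hat x^1_r$ and $\hat y^1_r$: generically they differ, and a short linear-ordering argument on the column payoffs as a function of the row mix $z$ shows that whichever of the two has the larger row coordinate is Nash while the other is not.

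The real gap is the heteroclinic connection. Your primary route---engineer an invariant surface or curve through $\hat x$ and $\hat y$---requires non-generic symmetry in the payoffs and so cannot establish the lemma as stated; your fallback---follow $W^u(\hat x)$ and invoke an unspecified monotone coordinate---has no concrete mechanism behind it, and no such monotone quantity is apparent here. The paper takes an entirely different route that bypasses stable/unstable-manifold tracking. It observes that the two PNEs $s_1,s_2$ are the only attractors of this $2\times 3$ game and lets $W$ be the closed invariant set of points whose $\omega$-limit is neither $\{s_1\}$ nor $\{s_2\}$. The key tool you are missing is Hofbauer's result (\cite{hofbauer_time_2009}, Prop.~5.1) that the time-average limit set of any interior replicator orbit is invariant under best-response dynamics; since the only best-response-invariant subset of $W\cap\bd X$ is the Nash point $\{\hat y\}$, every $z\in W\cap\intr X$ has $\omega(z)=\{\hat y\}$. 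Running the same argument in reverse time (i.e.\ on the negated game, whose quasi-strict Nash is $\hat x$) produces a set $S$ whose interior points all have $\alpha$-limit $\{\hat x\}$. A short topological argument near $\hat x$---using that $\hat x$ sits on the common boundary of the basins $\omega^{-1}(s_1)$ and $\omega^{-1}(s_2)$---then shows $S\cap W\cap\intr X\neq\emptyset$, and any point in this intersection lies on the desired heteroclinic orbit.
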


Using this lemma, we return to Figure~\ref{fig:2player counterexample}. By choosing appropriate payoffs, we can make $\hat y$ be a Nash equilibrium in the $2\times 3$ subgame containing $\hat x$ and $\hat y$, and make $\hat z$ a Nash equilibrium in the $2\times 3$ subgame containing $\hat y$ and $\hat z$. By Lemma~\ref{lem: counter 2}, there is a trajectory from $\hat x$ to $\hat y$ and similarly there is a trajectory from $\hat y$ to $\hat z$. By the same argument as the case of $a$ and $\hat x$, there is a trajectory from $\hat z$ to $c$. Finally, we complete the argument by observing that these trajectories form a sequence of \emph{heteroclinic orbits} (see footnote 3) from $a$ to $\hat x$ to $\hat y$ to $\hat z$ to $c$ and finally to $b$. This implies that any attracting set which contains $a$ must necessarily contain $b$. Each attractor contains at least one sink equilibrium, and distinct attractors are disjoint. Because any attractor containing $H_a$ must also contain $H_b$, we conclude that this game has only a single attractor, despite having two sink equilibria.

\section{Pseudoconvex sink equilibria are attractors} \label{sec: sink stability}

Our understanding of stability under the replicator dynamic is becoming clearer. When the sink equilibrium has a very simple structure, such as when it is exactly the profiles of a subgame \citep{biggar_replicator_2023}, then its content is an attractor. PNEs are a special case of this. These sinks have no local sources. Sink equilibria of zero-sum games can have non-trivial structure, albeit subject to significant constraints, and they too define attractors \citep{biggar_attractor_2024}. On the other hand, the counterexamples above demonstrate that the presence of local sources can cause some sink equilibria to not be attractors. What special properties of the sink equilibria in these games prevent the occurrence of local sources, and ensure stability? To better understand this, we will examine the structure of their $2\times 2$ subgames, which will lead us to the concept of \emph{pseudoconvexity}.
\subsection{Understanding pseudoconvexity} \label{sec: understanding pseudoconvexity}

In a $2\times 2$ subgame, the presence of a local source requires that exactly three of the profiles are within the sink equilibrium.

\begin{defn} \label{def:corner}
Let $Y$ be a $2\times 2$ subgame of a game. If \emph{exactly three} of the profiles in $Y$ are contained in a sink equilibrium $H$, we call this subgame a \emph{cavity} of $H$. 
\end{defn}
\begin{figure}
    \centering
    \begin{subfigure}{.3\textwidth}
        \centering
        \includestandalone{figs/corner}
        \caption{}
        \label{fig:corner 1}
    \end{subfigure}
    \begin{subfigure}{.3\textwidth}
        \centering
        \includestandalone{figs/corner2}
        \caption{}
        \label{fig:corner 2}
    \end{subfigure}
    \begin{subfigure}{.3\textwidth}
    \centering
    \includestandalone{figs/corner3}
    \caption{}
    \label{fig:corner 3}
    \end{subfigure}
    \caption{A \emph{cavity} (Def.~\ref{def:corner}) of a sink equilibrium $H$, where $x\not\in H$ and $w,y,z\in H$. Because $x\not\in H$, the arcs at $x$ are necessarily directed towards $y$ and $z$ respectively. Up to symmetry, there are three cases, shown in Figs.~\ref{fig:corner 1},~\ref{fig:corner 2} and~\ref{fig:corner 3}.}
    \label{fig:corners}
\end{figure}

Observe that a sink equilibrium $H$ is a subgame \emph{if and only if} it has no cavities. Equivalently, $H$ is a subgame if and only if $\content(H)$ is convex. Cavities are $2\times 2$ subgames where the sink equilibrium is ``locally concave.'' It is not hard to see that the absence of cavities is sufficient to guarantee that $\content(H)$ is an attractor, but it is a very strong requirement, because it implies the sink equilibrium is a subgame. Can a weaker requirement suffice? Examining Figure~\ref{fig:corners}, we note that case (c) is a local source and so clearly will cause a problem, but case (b) is a local sink --- all points converge uniformly to the sink equilibrium. Finally, in case (a), all points approach the sink equilibrium but may or may not do so uniformly, depending on the relative sizes of $b$ and $a$.

\begin{defn} \label{def: pseudoconvex cavity}
    Let $w = (\alpha,\beta)$ be a profile in a sink equilibrium $H$, and $x = (\gamma,\delta)$ be another profile outside $H$ such that the subgame $Y = \{\alpha,\gamma\}\times\{\beta,\delta\}$ is a cavity of $H$. We say $Y$ is \emph{pseudoconvex} if $W_{(\gamma,\beta),(\alpha,\beta)} + W_{(\alpha,\delta),(\alpha, \beta)} < 0$.
\end{defn}
That is, $Y$ is locally pseudoconvex if the concavity of $Y$ is `not too severe' --- the sum of weights of the arcs into the profile $w$ is positive. Using the language of Figure~\ref{fig:corners}, $Y$ is pseudoconvex if either it is type \ref{fig:corner 2}, or it is type \ref{fig:corner 1} \emph{and} $a\leq b$. This prevents local sources (type \ref{fig:corner 3}) in $2\times 2$ subgames. However, quite remarkably, this is also sufficient to guarantee that the entirety of $\content(H)$ is an attractor (Theorem~\ref{weak source stability} below).
\begin{defn}[Pseudoconvex sink equilibria]
    A sink equilibrium is pseudoconvex if every cavity is pseudoconvex, and a game is pseudoconvex if every sink equilibrium is pseudoconvex.
\end{defn}

Note that because pseudoconvexity depends only on $2\times 2$ subgames, there is a very natural polynomial-time algorithm for checking pseudoconvexity of a sink equilibrium. One must simply examine each cavity and check it satisfies Definition~\ref{def: pseudoconvex cavity}.

As we mentioned earlier, when a sink equilibrium is a subgame it is trivially pseudoconvex, because there are no cavities. It is also true --- though much less obvious, see the appendix --- that the sink equilibria of zero-sum games are pseudoconvex.

\begin{lem} \label{ZS are pseudoconvex}
     Two-player zero-sum games are pseudoconvex.
\end{lem}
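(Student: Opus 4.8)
The plan is to reduce the statement to a local computation inside a single cavity, and to observe that in the zero-sum case the mere fact that $H$ is a sink component of the preference graph (``no arcs leave $H$''), together with the identity $u_2 = -u_1$, already forces the pseudoconvexity inequality. No structural theory of zero-sum sink equilibria is needed. Let $u$ be a (generic) two-player zero-sum game and let $Y$ be a cavity of a sink equilibrium $H$. Since $Y$ is a cavity, exactly one of its four profiles lies outside $H$; call it $x = (\gamma,\delta)$, and let $w = (\alpha,\beta)$ be its diagonal opposite, so that $Y = \{\alpha,\gamma\}\times\{\beta,\delta\}$ in the notation of Definition~\ref{def: pseudoconvex cavity} and the remaining two profiles $(\gamma,\beta)$ and $(\alpha,\delta)$ both lie in $H$. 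It suffices to verify $W_{(\gamma,\beta),(\alpha,\beta)} + W_{(\alpha,\delta),(\alpha,\beta)} < 0$.

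First I would pin down the two arcs of the preference graph incident to $x$ inside $Y$. The only neighbours of $x$ in $Y$ are $(\gamma,\beta)$ (a player-$2$ deviation from $x$) and $(\alpha,\delta)$ (a player-$1$ deviation), and both are in $H$. Because $H$ is a sink strongly connected component, no arc leaves $H$, so both arcs are oriented into $H$: $x\to(\gamma,\beta)$ and $x\to(\alpha,\delta)$ (this is exactly the observation recorded in the caption of Figure~\ref{fig:corners}). Unpacking these: the first arc is a player-$2$ comparison, giving $u_2(\gamma,\beta) > u_2(\gamma,\delta)$, and the second is a player-$1$ comparison, giving $u_1(\alpha,\delta) > u_1(\gamma,\delta)$. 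Substituting $u_2 = -u_1$ into the first turns it into $u_1(\gamma,\delta) > u_1(\gamma,\beta)$, and chaining with the second yields the single inequality I will use: $u_1(\alpha,\delta) > u_1(\gamma,\beta)$.

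The final step is to expand the pseudoconvexity expression using $W_{q,p} = u_i(q) - u_i(p)$ for $i$-comparable $p,q$. Since $(\gamma,\beta)$ and $(\alpha,\beta)$ differ only in player $1$'s strategy, $W_{(\gamma,\beta),(\alpha,\beta)} = u_1(\gamma,\beta) - u_1(\alpha,\beta)$; and since $(\alpha,\delta)$ and $(\alpha,\beta)$ differ only in player $2$'s strategy, $W_{(\alpha,\delta),(\alpha,\beta)} = u_2(\alpha,\delta) - u_2(\alpha,\beta) = u_1(\alpha,\beta) - u_1(\alpha,\delta)$, where again $u_2 = -u_1$. Adding the two, the $u_1(\alpha,\beta)$ terms cancel and the sum collapses to $u_1(\gamma,\beta) - u_1(\alpha,\delta)$, which is negative by the inequality of the previous paragraph. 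Hence $Y$ is pseudoconvex; as $Y$ was an arbitrary cavity of an arbitrary sink equilibrium of $u$, the game is pseudoconvex.

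There is no genuinely hard step here; the points requiring care are (i) recognising that the orientation of the two arcs at the outside profile $x$ is forced purely by the sink-component property, with no appeal to the finer structure of zero-sum games, and (ii) bookkeeping the player indices and the sign convention for $W$ so that the telescoping cancellation actually produces $u_1(\gamma,\beta) - u_1(\alpha,\delta)$ — a single sign slip would flip the inequality. Genericity enters only in deducing the strict inequalities $u_1(\gamma,\delta) > u_1(\gamma,\beta)$ and $u_1(\alpha,\delta) > u_1(\gamma,\delta)$ from the arcs of $Y$, which is precisely the condition under which those arcs are well-defined.
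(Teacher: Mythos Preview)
Your proof is correct and follows essentially the same approach as the paper's: both use the sink-component property to orient the two arcs at the outside profile $x=(\gamma,\delta)$ inward, then apply the zero-sum identity to convert this into the pseudoconvexity inequality at $w=(\alpha,\beta)$. The only difference is presentational: the paper packages the algebraic step as a citation to Lemma~4.7 of \cite{biggar_attractor_2024} (the identity $W_{(\gamma,\beta),(\gamma,\delta)}+W_{(\alpha,\delta),(\gamma,\delta)} = -(W_{(\gamma,\beta),(\alpha,\beta)}+W_{(\alpha,\delta),(\alpha,\beta)})$), whereas you unwind the same cancellation directly via $u_2=-u_1$, which makes your version self-contained.
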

Hence for all the classes of sink equilibria where we know the content is an attractor, these sink equilibria are pseudoconvex. It is natural to therefore conjecture that pseudoconvexity is \emph{sufficient} for the content to be an attractor. This turns out to be true for two-player games, as we show in Theorem~\ref{weak source stability}. However, pseudoconvexity also encompasses games which are quite distinct from these cases. As an example, consider the famous example of \emph{Shapley's game}, from \cite{shapley_topics_1964}. A payoff matrix representation is given in Figure~\ref{fig:shapley payoff}, with its preference graph in Figure~\ref{fig:shapley graph}. Shapley demonstrated that \emph{fictitious play} (FP) \citep{brown1949some,robinson_iterative_1951} converged to a 6-cycle on the boundary---which is exactly the content of the unique sink equilibrium. This cycle is obviously far from being a subgame, and further it is also far from being the sink equilibrium of a zero-sum game (see Theorem~4.10 of \cite{biggar_graph_2023}, which proves that such sink equilibria are `\emph{near-subgames}'). However, it is pseudoconvex! First, observe that by Fig.~\ref{fig:shapley payoff} each arc on the 6-cycle has the same weight, which is one. Because the sink equilibrium is exactly a cycle, each cavity (Def.~\ref{def:corner}) is of the form in Figure~\ref{fig:corner 1}. Because the two weights $a$ and $b$ in Fig.~\ref{fig:corner 1}, this sink equilibrium satisfies Definition~\ref{def: pseudoconvex cavity}. More generally, the whole class of {\em uniformly weighted cycles} is pseudoconvex---these are sink equilibrium that are simple cycles where every arc on the cycle has the same weight. Being pseudoconvex, all such cycles turn out to be replicator attractors.

\subsection{Stability of pseudoconvex sink equilibria}

To prove Theorem~\ref{weak source stability}, we will need some new concepts. A key idea is to shift perspective from the mixed strategy space into the \emph{correlated space} of distributions over profiles. We show that the replicator can be given a simple presentation in this space, in terms of a matrix we call the \emph{product matrix} of the game. This construction derives from \cite{biggar_attractor_2024}, who introduced a similar idea in two-player zero-sum games---here we generalize it to $n$-player general-sum games. While straightforward, this transformation is critical for our result and we believe it is a useful fact for analyzing the replicator in other contexts as well.

\begin{lem} \label{product replicator}
Let $u$ be a $N$-player $m_1\times m_2 \times \dots\times m_N$ game. The \emph{product matrix} of $u$ is the following matrix $M\in \real^{(\prod_i m_i)\times (\prod_i m_i)}$, indexed by profiles $p$ and $q$ in $Z$:
\begin{equation} \label{product matrix}
M_{q,p} = \sum_{i=1}^N \left ( u_i(q_i;p_{-i}) - u_i(p) \right ) = \sum_{i=1}^N W_{(q_i;p_{-i}),p}
\end{equation}
Next, let be $p = (s_1,s_2,\dots,s_N)$ be a pure profile. Given a mixed profile $x$, the product distribution $z_p := \prod_{i=1}^N x^i_{s_i}$ defines a distribution over profiles. Then, under the replicator dynamic:
\[ \dot z_p = z_p (M z)_p \]
\end{lem}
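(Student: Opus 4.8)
The plan is to compute $\dot z_p$ directly from the definition $z_p = \prod_{i=1}^N x^i_{s_i}$ using the product rule, substitute in the replicator equation for each $\dot x^i_{s_i}$, and then recognize the resulting sum as $z_p (Mz)_p$. First I would write, by the product rule, $\dot z_p = \sum_{i=1}^N \dot x^i_{s_i}\prod_{j\neq i} x^j_{s_j}$, and then plug in $\dot x^i_{s_i} = x^i_{s_i}\bigl(\exptu_i(s_i;x_{-i}) - \sum_{r\in S_i} x^i_r\exptu_i(r;x_{-i})\bigr)$. Since $x^i_{s_i}\prod_{j\neq i}x^j_{s_j} = z_p$, this gives $\dot z_p = z_p\sum_{i=1}^N\bigl(\exptu_i(s_i;x_{-i}) - \sum_{r\in S_i}x^i_r\exptu_i(r;x_{-i})\bigr)$, so it remains to show the bracketed sum equals $(Mz)_p = \sum_{q\in Z} M_{q,p}z_q$.

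The core of the argument is therefore an identity between two expressions for the same quantity: the "multilinear" expected-utility differences appearing in the replicator, and the bilinear form $Mz$ with $M$ defined by pure-profile utility differences. I would expand $(Mz)_p = \sum_q z_q \sum_{i=1}^N\bigl(u_i(q_i;p_{-i}) - u_i(p)\bigr)$ and swap the order of summation to get $\sum_{i=1}^N\sum_q z_q\bigl(u_i(q_i;p_{-i}) - u_i(p)\bigr)$. For each fixed $i$, the term $u_i(q_i;p_{-i})$ depends on $q$ only through its $i$-th coordinate $q_i$; since $z$ is a product distribution, marginalizing out all coordinates of $q$ other than $i$ leaves $\sum_{r\in S_i} x^i_r u_i(r;p_{-i})$, and the fixed coordinates $p_{-i}$ average against $x_{-i}$ — this is exactly where I must be careful: $u_i(r;p_{-i})$ still has the pure profile $p_{-i}$ in it, but summing $z_q$ over the remaining coordinates of $q$ does nothing to $p_{-i}$, so I actually get $\sum_{r}x^i_r u_i(r;p_{-i})$, not $\sum_r x^i_r\exptu_i(r;x_{-i})$. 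The resolution is that in $\dot z_p$ the relevant comparison profile is the \emph{pure} profile $p$, i.e.\ one should read $\exptu_i(s_i;x_{-i})$ in the replicator equation applied along the product-distribution trajectory correctly: the bracket $\exptu_i(s_i;x_{-i}) - \sum_r x^i_r\exptu_i(r;x_{-i})$ is a difference in which the $x_{-i}$ parts cancel coordinatewise against the same $x_{-i}$, so it equals $\sum_{q_{-i}} \bigl(\prod_{j\neq i}x^j_{q_j}\bigr)\bigl(u_i(s_i;q_{-i}) - \sum_r x^i_r u_i(r;q_{-i})\bigr)$, and this matches the marginalized form of $\sum_q z_q(u_i(q_i;p_{-i}) - u_i(p))$ after one observes that the terms $u_i(p)$ inside are independent of $q_i$ and the $\sum_{r}x^i_r u_i(r;p_{-i})$ term can be subtracted without changing the value since $\sum_r x^i_r = 1$ and the correction is a common offset that cancels in the double sum over $q$.

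I expect the main obstacle to be precisely the bookkeeping in the previous paragraph: carefully tracking which coordinates of the index profile $q$ are being marginalized, verifying that $z$ being a \emph{product} distribution is exactly what makes the marginalization collapse the multilinear utility into the linear $\sum_r x^i_r(\cdot)$ form, and confirming that the subtracted "average" terms line up so that the $p$-independent pieces ($u_i(p)$) and the $q$-independent pieces cancel correctly. A clean way to organize this is to first prove the scalar identity $\sum_{q\in Z} z_q\, u_i(q_i;p_{-i}) = \sum_{r\in S_i} x^i_r\, u_i(r;p_{-i})$ for any product distribution $z$ (immediate from $\sum_q z_q = 1$ and factoring), and symmetrically handle $u_i(p)$ via $\sum_q z_q\, u_i(p) = u_i(p)$; then assemble. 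Once the identity $(Mz)_p = \sum_{i=1}^N\bigl(\exptu_i(s_i;x_{-i}) - \sum_r x^i_r\exptu_i(r;x_{-i})\bigr)$ is in hand, the lemma follows immediately by combining with the product-rule computation of $\dot z_p$. No deep machinery is needed — just linearity, the product structure of $z$, and care.
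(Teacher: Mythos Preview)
Your overall strategy---product rule on $z_p=\prod_i x^i_{s_i}$, then identify the bracket as $(Mz)_p$---is exactly the paper's. The gap is a single but consequential index swap: by definition of matrix--vector multiplication, $(Mz)_p=\sum_{q}M_{p,q}\,z_q$, not $\sum_q M_{q,p}\,z_q$. With $p=(s_1,\dots,s_N)$, the correct expansion is
\[
(Mz)_p \;=\; \sum_{q\in Z} z_q\sum_{i=1}^N\bigl(u_i(s_i;q_{-i})-u_i(q)\bigr),
\]
and now the marginalizations are effortless: $\sum_q z_q\,u_i(s_i;q_{-i})=\exptu_i(s_i;x_{-i})$ (sum out $q_i$ using $\sum_{q_i}x^i_{q_i}=1$, then recognize the expectation over $q_{-i}$), and $\sum_q z_q\,u_i(q)=\exptu_i(x)=\sum_r x^i_r\,\exptu_i(r;x_{-i})$. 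This matches the replicator bracket on the nose, and the lemma follows. The paper does precisely this, first rewriting $\dot x^i_s/x^i_s=\sum_{p\in Z} z_p\bigl(u_i(s;p_{-i})-u_i(p)\bigr)$ and then summing over $i$ via the product rule.

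Your version instead expanded $\sum_q z_q\bigl(u_i(q_i;p_{-i})-u_i(p)\bigr)$, which evaluates to $\sum_r x^i_r\,u_i(r;p_{-i})-u_i(p)$---a quantity involving the \emph{pure} profile $p_{-i}$ rather than the mixed $x_{-i}$. That expression is genuinely different from the replicator bracket (already in a $2$-player game they disagree), and the ``resolution'' you sketch---that the $x_{-i}$ parts ``cancel coordinatewise'' or that a ``common offset cancels in the double sum over $q$''---does not hold. There is no cancellation to find; the discrepancy is an artifact of the swapped indices. Fix $(Mz)_p=\sum_q M_{p,q}z_q$ and the whole second paragraph of your proposal collapses to two lines.
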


In other words, the distribution $z$ induced over profile in $Z$ by a mixed strategy $x$ evolves by a simple formula in terms of the product matrix---simpler even than the original definition of the replicator! Using this, we can prove the main theorem of this section:


\begin{thm}\label{weak source stability}
    If a sink equilibrium $H$ of a two-player game $u$ is pseudoconvex, then its content is an attractor of the replicator.
\end{thm}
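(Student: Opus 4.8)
The plan is to exhibit a Lyapunov-type function whose small sublevel sets are forward-invariant neighborhoods of $\content(H)$ on which every orbit converges to $\content(H)$, and then treat minimality separately. Throughout I would work in the correlated coordinates of Lemma~\ref{product replicator}: write $z_p=\prod_i x^i_{p_i}$ for the product distribution induced by a mixed profile $x$, and set
\[
V(x)\;:=\;\sum_{p\notin H} z_p ,
\]
the total mass that $z$ places on profiles \emph{outside} the sink equilibrium $H$. Then $V$ is continuous and nonnegative on $X$, it vanishes exactly when $\supp(x^1)\times\supp(x^2)\subseteq H$---that is, exactly on $\content(H)$ (Definition~\ref{def: content})---and $\content(H)$, being a finite union of invariant subgame faces, is compact and invariant. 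By Lemma~\ref{product replicator}, $\dot V=\sum_{p\notin H}\dot z_p=\sum_{p\notin H} z_p\,(Mz)_p$ with $(Mz)_p=\sum_{i}\bigl(\exptu_i(p_i;x_{-i})-\exptu_i(x)\bigr)$. Everything then reduces to proving $\dot V<0$ on $U\setminus\content(H)$ for some neighborhood $U$ of $\content(H)$; standard dynamical-systems theory~\citep{alongi2007recurrence}, together with compactness of $\content(H)$, upgrades this to a small forward-invariant sublevel set $\{V<c\}$ and to asymptotic stability of $\content(H)$.

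The crux is the behavior of $(Mz)_p$ for $p\notin H$ as $x$ approaches a point $x^0\in\content(H)$; let $z^0$ be the product distribution of $x^0$ (supported on $H$). A short manipulation rewrites the limit as a $z^0$-average of \emph{pseudoconvexity-type expressions}:
\[
(Mz)_p\ \longrightarrow\ \sum_{q\in H}z^0_q\Bigl(W_{(p_1,q_2),\,q}+W_{(q_1,p_2),\,q}\Bigr)\qquad(x\to x^0).
\]
For each $q$ with $z^0_q>0$, the $2\times 2$ block $\{q_1,p_1\}\times\{q_2,p_2\}$ contains $q\in H$ and $p\notin H$: when this block is a \emph{cavity} (exactly three profiles in $H$), pseudoconvexity (Definition~\ref{def: pseudoconvex cavity}) is \emph{precisely} the assertion that the bracketed quantity is negative; when more than one profile of the block leaves $H$, the sink-SCC property (no arcs leave $H$) forces the arcs into $q$ to be positive, again making the bracket negative. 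The cleanest instance is a corner $w=(\alpha,\beta)$ of $\content(H)$ lying in a cavity $\{\alpha,\gamma\}\times\{\beta,\delta\}$: there the only nearby profile outside $H$ is $(\gamma,\delta)$, so $V\approx z_{(\gamma,\delta)}$ and $\ddt\log z_{(\gamma,\delta)}=(Mz)_{(\gamma,\delta)}\to W_{(\gamma,\beta),(\alpha,\beta)}+W_{(\alpha,\delta),(\alpha,\beta)}<0$---exactly Definition~\ref{def: pseudoconvex cavity}, and matching the trichotomy of Figure~\ref{fig:corners} (automatic for type~\ref{fig:corner 2}, true for type~\ref{fig:corner 1} exactly when $a\le b$, and failing for the local source of type~\ref{fig:corner 3}, where Lemma~\ref{local sources imply instability} confirms the attractor really is lost).

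To upgrade these sign facts to $\dot V<0$ on a whole neighborhood, I would classify each $p\notin H$ with $z_p>0$ near $x^0$ by how many of its coordinates leave the support box of $x^0$ (the order to which $z_p$ vanishes), expand $z_p(Mz)_p$ via the $z^0$-average above, and control the summands that are \emph{not} individually pinned down by a single cavity---those arising from $2\times2$ blocks in which two or more profiles leave $H$---against the strictly negative leading terms, shrinking $U$ until the $o(\cdot)$ errors are uniformly dominated. I expect this bookkeeping to be the main obstacle: matching each term to the cavity (or sink-SCC edge) that governs its sign, handling mixed blocks where both an off-support coordinate of $p$ and a ``tie-breaking'' coordinate leave $H$, and obtaining uniformity as $x^0$ ranges over the finitely many faces of $\content(H)$. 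The weighted-average identity is the tool that makes this plausible, since summing the per-cavity inequalities $W_{(p_1,q_2),q}+W_{(q_1,p_2),q}<0$ against $z^0$ reproduces exactly the quantities $\exptu_i(p_i;x^0_{-i})-\exptu_i(x^0)$ that appear in $\dot V$; the single-cavity computation and the sink-SCC sign argument are, by contrast, routine.

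Finally, with $\dot V<0$ on $U\setminus\content(H)$ established, a sublevel set $\{V<c\}$ with $\overline{\{V<c\}}\subseteq U$ is forward-invariant and every orbit inside it has $V$ strictly decreasing to $0$, so its $\omega$-limit lies in $\content(H)$; hence $\content(H)$ is an asymptotically stable invariant set. For minimality one argues as in the subgame and zero-sum cases \citep{biggar_replicator_2023,biggar_attractor_2024}: any attracting $A\subseteq\content(H)$ contains a sink equilibrium by \cite{biggar_replicator_2023}, the only sink equilibrium whose profiles all lie in $\content(H)$ is $H$ itself, and the strong connectivity of $H$---each of whose arcs lifts to a heteroclinic orbit contained in $\content(H)$---makes $\content(H)$ internally chain transitive, so it contains no proper attracting subset; thus $A=\content(H)$, and $\content(H)$ is an attractor. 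This recovers the previously known cases, since subgame sink equilibria have no cavities and two-player zero-sum games are pseudoconvex by Lemma~\ref{ZS are pseudoconvex}.
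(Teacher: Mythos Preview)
Your overall plan coincides with the paper's: the same Lyapunov function (the paper works with $z_H=1-V$), the same appeal to Lemma~\ref{product replicator} to rewrite the derivative as a bilinear form in the product matrix, and the same $2\times 2$-block organization with an $\epsilon$-versus-$\epsilon^2$ endgame. Your minimality paragraph is a useful addition the paper leaves implicit.

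There is, however, one concrete gap. You assert that ``when more than one profile of the block leaves $H$, the sink-SCC property forces the arcs into $q$ to be positive, making the bracket negative.'' This is correct when \emph{three} profiles leave $H$, but fails when exactly two do: with $p=(p_1,p_2)\notin H$, $q=(q_1,q_2)\in H$, $a:=(p_1,q_2)\in H$ and $b:=(q_1,p_2)\notin H$, the bracket is $M_{p,q}=W_{a,q}+W_{b,q}$; the sink property forces $W_{b,q}<0$, but $W_{a,q}$ is an arc \emph{inside} $H$ and carries no sign. Nor are such terms higher order---if $p_2\in\supp(x^0_2)$ and $p_1\notin\supp(x^0_1)$ then $z_pz_q$ is first order in $V$---so they cannot be absorbed into the $o(\cdot)$ remainder you propose to ``control against the strictly negative leading terms.'' The paper's resolution is not to bound $M_{p,q}$ in isolation but to \emph{pair the two diagonals of the block}: since $z_pz_q=z_az_b$ and $b\notin H$, the other diagonal contributes $z_bz_aM_{b,a}$ to $\dot V$ as well, and the combination
\[
z_pz_q\bigl(M_{p,q}+M_{b,a}\bigr)=z_pz_q\bigl(W_{b,q}+W_{p,a}\bigr)
\]
now has both summands directed from a profile outside $H$ into one inside $H$, hence is strictly negative. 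This cross-$p$ cancellation (the paper's ``case~3'') is the missing ingredient; once it is in place, your argument and the paper's proof coincide.
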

\begin{proof}[\emph{Sketch---full proof in appendix}]
We prove this using a \emph{Lyapunov argument} on the cumulative total mass distributed over the profiles in $H$. In correlated space, this is simply the sum $z_H := \sum_{h\in H} z_h$ where $z$ is the mass on a single profile, as in Lemma~\ref{product replicator}. The total $z_H$ is equal to one exactly when a point lies in $\content(H)$. We show that this is increasing near $\content(H)$ (where $z_H$ is sufficiently close to one). Using linearity and Lemma~\ref{product replicator}, we can obtain an expression for $\dot z_H$ in terms of the product matrix $M$, where each term corresponds to an arc in the preference graph with at least one endpoint in $H$. By grouping terms into $2\times 2$ subgames, we end up with a case-by-case analysis on the preference graphs of those subgames. With $z_H$ close to one, we show that the only potentially negative term is caused by a cavity that is not pseudoconvex! Applying pseudoconvexity thereby guarantees the result.
\end{proof}

We believe that Theorem~\ref{weak source stability} is a major step towards our ultimate goal: a polynomial-time algorithm which, given a two-player game in normal form, identifies its attractors. When the sink equilibria are pseudoconvex, we know the attractors: they are the content of these sink equilibria. But if a local source exists in a sink component, some replicator paths ``escape" the component. However finding local sources, or determining if any exist, can be very hard. The power of this theorem is that pseudoconvexity (a $2\times 2$ property) is sufficient to guarantee, non-constructively, that no such escaping trajectories exist. If the sink equilibrium is not pseudoconvex, the proof of the theorem does not seem to provide guidance on where to look for these escaping trajectories. Analyzing the attractors of two-player games beyond pseudoconvexity is an important open research problem left by this work.

\section{Conclusions and open problems} \label{sec: conclusions}

Let us return to our original goal: to understand, and ultimately compute, the possible long-run outcomes of game dynamics by exploiting their relationship to sink equilibria. We have made some significant progress on this problem for the replicator dynamic. For games where the sink equilibria are pseudoconvex, they give a precise and efficiently-computable characterization of the attractors. On the other hand, we have also shown how this simple picture does not apply to all games, as the presence of local sources can lead to attractors that are larger than the sink equilibria, sometimes merging two or more sink equilibria. 


This work opens up a number of important open problems.

\paragraph{Local sources and paths.} We know that the presence of local sources can lead to trajectories which escape sink equilibria. Yet local sources too possess a specific graph-theoretic structure---we believe that a broader combinatorial framework generalizing the preference graph could incorporate this case, and thus potentially characterize the ultimate structure of the attractors of a game.



\paragraph{Beyond uniform convergence.} The criteria for a set to be an attractor are very strong---in particular they require a neighborhood of the set to exist where all trajectories approach the attractor. However, for the replicator dynamic, there exist cases where this requirement seems too demanding. Figure~\ref{fig:2x3 dominance} shows a $2\times 3$ game, where the second column strategy dominates the third. Under the replicator, the third column eventually vanishes from all interior starting points, and so the dynamic converges to the highlighted $2\times 2$ subgame. But the preference graph is strongly connected, so the unique attractor is the whole game. The discrepancy is explained by the fact that the replicator does not converge \emph{uniformly} to this smaller set \citep{biggar_attractor_2024,biggar_preference_2025}. This lack of uniformity follows as a consequence of subgame-independence. This suggests that stronger prediction of game dynamics could be made if we allowed for weaker notions of convergence.
\begin{figure}
    \centering
    \includestandalone{figs/2x3_dominance}
    \caption{Under the replicator, all interior starting points converge to the highlighted $2\times 2$ subgame. However, this is a strict subset of the unique attractor, which is the whole strategy space \citep{biggar_preference_2025}.}
    \label{fig:2x3 dominance}
\end{figure}

\paragraph{Algorithmic problems.} The ultimate goal of this research program is a polynomial-time algorithm which, given a game in normal form, will output the combinatorial structure of its attractors, and do this for a wide range of learning dynamics. For the case of the replicator dynamic we have made reasonable progress, with pseudoconvexity providing a simple sufficient condition for local stability. However, the possibility of local sources leads to two new problems:
\begin{enumerate}
    \item We know that the presence of a local source in a sink equilibrium $H$ is sufficient to show that $\content(H)$ is not an attractor. But what of the converse: if a sink equilibrium is not stable, must there exist a local source acting as a `witness' to this instability? From a computational perspective, given $H$, can we efficiently verify if $\content(H)$ is an attractor?

    \item If a local source does exist, some additional points may need to be added to the sink equilibrium. But which? Is there an iterative procedure for adding the missing points to a sink equilibrium until we find an attractor?
\end{enumerate}

\paragraph{Large games.}  The limit behavior of learning in {\em large} multi-player games is of great interest in economics --- however, we know that even the most modest algorithmic goals related to sink equilibria are PSPACE-complete for many succinct representations of games \citep{fabrikant2008complexity}. It would be very interesting to make progress in characterizing and computing the attractor structure of a game for the case of {\em symmetric} multi-player games.

\section*{Acknowledgements}
This work supported by a grant by the NSF, and a MURI grant from the Office of Naval Research.

\bibliographystyle{plainnat}
\bibliography{refs}

\appendix

\section{Omitted proofs}

\begin{lem}[Lemma~\ref{local sources imply instability}]
    If $H$ is a sink equilibrium, and $H$ has a local source, then the $\content(H)$ is not an attractor. 
\end{lem}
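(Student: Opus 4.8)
The plan is to use subgame-independence to localize the problem to the subgame $Y$ in which $x$ is a local source, and then to produce, arbitrarily close to $x$, a replicator trajectory whose forward limit set meets the relative interior of $Y$ — and is therefore disjoint from $\content(H)$ — contradicting asymptotic stability of $\content(H)$, hence the attractor property.

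First I would reduce to $Y$. Since the replicator is subgame-independent, $Y$ is invariant and carries the replicator of $u|_Y$, so if $\content(H)$ were an attractor of $u$ then $A := \content(H)\cap Y = \content(H\cap Z_Y)$ would be asymptotically stable for $u|_Y$. By condition~(2) of Definition~\ref{def: local source}, $Y\not\subseteq\content(H)$, so the full face $Y$ is not one of the faces making up $A$; hence $A$ is a union of \emph{proper} faces of $Y$, so $A\subseteq\bd Y$ and, crucially, $A\cap\intr Y=\emptyset$. Next I would treat $x$ as a rest point of the replicator of $u|_Y$: since $x$ is a Nash equilibrium of $-u|_Y$, every strategy in $\supp(x^i)$ earns the common payoff $\exptu_i(x)$ against $x_{-i}$, so $\dot x = 0$ at $x$. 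Let $F$ be the face with $x$ in its relative interior (so $F\subseteq A$). The linearization of the replicator at $x$, restricted to the directions transverse to $F$, has diagonal entries $\exptu_i(s;x_{-i})-\exptu_i(x)$ over all players $i$ and all strategies $s$ of $Y$ with $s\notin\supp(x^i)$; by quasi-strictness of $x$ as an equilibrium of $-u|_Y$ (condition~(3)), each of these is strictly positive. Hence $x$ has a nontrivial local unstable manifold whose tangent at $x$ surjects onto the transverse coordinate space, so this manifold meets $\intr Y$ in every neighborhood of $x$. (For the coordination-game instance of Figure~\ref{fig:cog game} this is simply the classical fact that the interior equilibrium $\hat x$ is a saddle whose stable manifold is a heteroclinic arc issuing from the source $a$.)

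Finally I would derive the contradiction. Take $w$ on that unstable manifold, in $\intr Y$ and as close to $x$ as desired: its backward orbit converges to $x\in A$, while its forward orbit stays in the invariant set $\intr Y$. If $A$ were asymptotically stable, its basin $\mathcal B$ would be open, forward invariant, and contain $A$, hence $w\in\mathcal B$, so $\emptyset\neq\omega(w)\subseteq A$. The main obstacle is that this is \emph{not} yet absurd — an interior orbit may a priori accumulate on $\bd Y\supseteq A$. To rule this out I would first replace $Y$ by a \emph{minimal} subgame that contains $\supp(x)$ and is not contained in $\content(H)$; the profile $x$ is still a local source of $H$ there, and minimality forces $A$ to coincide with $\bd Y$ throughout a neighborhood of $x$. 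Then I would run a Lyapunov/Chetaev argument with the cross-entropy $V(v) = -\sum_i\sum_{r\in\supp(x^i)}x^i_r\log v^i_r$, which has a strict minimum at $x$ and satisfies $\dot V = \sum_i\bigl(\exptu_i(v) - \exptu_i(x^i;v_{-i})\bigr)$; by quasi-strictness the external strategies strictly beat $\exptu_i(x)$, so $\dot V > 0$ on $\intr Y$ near $x$. Monotonicity of $V$ along the orbit of $w$ then keeps $\dist(\phi^t(w),\bd Y)$ — which equals $\dist(\phi^t(w),A)$ while the orbit stays near $x$ — bounded below, contradicting $\omega(w)\subseteq A\subseteq\bd Y$. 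The genuinely delicate step is making this quantitative enough that the orbit cannot drift so far from $x$ that the local identity $A=\bd Y$ is lost; an equivalent route, directly generalizing the $2\times2$ picture, is to show the unstable manifold of $x$ carries a heteroclinic orbit ending at a rest point in $\intr Y$ (which by condition~(2) automatically lies outside $\content(H)$), and producing such an orbit in the general $m$-strategy setting is then the crux. Once the contradiction is in hand, $A$ is not an attractor of $u|_Y$, and so $\content(H)$ is not an attractor of $u$ by subgame-independence.
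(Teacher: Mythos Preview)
Your setup is essentially the paper's: note $\content(H)\cap\intr Y=\emptyset$, and produce via the linearization a point in $\intr Y$ on the local unstable manifold of $x$ (equivalently, a point $z\in\intr Y$ with $\alpha(z)=\{x\}$). This is exactly the content of Lemma~\ref{nash interior stable points} in the appendix, phrased for $-u$ via the stable manifold theorem. Where you diverge is in the final step, and there you make the problem much harder than it is.

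You try to derive a contradiction from $\omega(w)\subseteq A$, correctly observe that this is not yet absurd since an interior orbit can accumulate on $\bd Y\supseteq A$, and then embark on a minimization-plus-Chetaev argument to control the forward orbit. This is unnecessary. The existence of a single point $z\in\intr Y$ with $\alpha(z)=\{x\}$ already contradicts \emph{Lyapunov stability} of $\content(H)$, without any information about $\omega(z)$: set $d:=\dist(z,\content(H))>0$ (positive since $\intr Y\cap\content(H)=\emptyset$). If $\content(H)$ were Lyapunov stable there would be $\delta>0$ such that orbits starting within $\delta$ of $\content(H)$ remain within $d/2$ of it for all forward time. But $\alpha(z)=\{x\}$ gives $\phi^{-T}(z)$ within $\delta$ of $x\in\content(H)$ for large $T$, yet $\phi^{T}(\phi^{-T}(z))=z$ is at distance $d>d/2$. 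This is the argument the paper's proof takes for granted when it says ``thus we only need to show that there is at least one point $z\in\intr Y$ with $\alpha(z)=\{x\}$''; the ``genuinely delicate step'' you flag simply does not arise.

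Two smaller remarks. First, your reduction ``$\content(H)$ attractor in $X$ $\Rightarrow$ $A$ asymptotically stable in $Y$'' is valid but superfluous: the Lyapunov-stability argument above works directly in $X$, which is what the paper does. Second, your proposed Lyapunov function $V$ has its minimum at $x$ and increases along interior orbits near $x$, but it measures divergence from $x$, not from $\bd Y$; monotonicity of $V$ does not by itself bound $\dist(\phi^t(w),\bd Y)$ from below, so even on its own terms that branch of your argument has a real gap.
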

\begin{proof}
    Let $x$ be a local source of $H$ in a subgame $Y$. Because $Y\not\subseteq \content(H)$, we know that $\content(H)\cap \intr Y = \emptyset$. Thus we only need to show that at there is at least one point $z\in\intr Y$ with $\alpha(z) = \{x\}$. Because $x$ is a quasi-strict Nash equilibrium of $-u$, this is equivalent to showing that, in $-u$, there is a $z\in\intr Y$ with $\omega(z) = \{x\}$. We prove this claim in Lemma~\ref{nash interior stable points}, by observing that the stable manifold of $x$ in $-u$ must necessarily intersect the interior of $X$.
\end{proof}

\begin{lem} \label{nash interior stable points}
    Let $\hat x$ be a quasi-strict Nash equilibrium on the boundary $\bd X$ of the strategy space $X$. Then there exists a point $z\in\intr X$ with $\omega(z) = \{\hat x\}$.
\end{lem}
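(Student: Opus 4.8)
The plan is to analyze the linearization of the replicator at $\hat x$ and show that every direction transverse to the support face of $\hat x$ is strictly contracting; the local stable manifold of $\hat x$ must then extend into $\intr X$, and any point of that manifold lying in $\intr X$ is the witness $z$ we want.

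Let $T_i := \supp(\hat x^i)$ and $T := \prod_i T_i$, so that $\hat x$ lies in the relative interior of the face $X_T = \prod_i \Delta(T_i)$; since $\hat x \in \bd X$ we have $T \subsetneq Z$, i.e.\ there is at least one coordinate $x^i_s$ with $s \notin T_i$, and this coordinate vanishes at $\hat x$. I would work in a coordinate chart for the affine hull of $X$ near $\hat x$ whose coordinates are the \emph{transverse} ones $(x^i_s)_{s\notin T_i}$ together with one independent \emph{in-face} coordinate per player on each $\Delta(T_i)$. The replicator field is polynomial, hence smooth on a full neighborhood of $\hat x$ in this chart, and $\hat x$ is a rest point because it is a Nash equilibrium.

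The first step is to compute the linearization $D$ of the replicator at $\hat x$ in this chart. For a transverse coordinate, $\dot x^i_s = x^i_s\,\bigl(\exptu_i(s;x_{-i}) - \exptu_i(x)\bigr)$; since the factor $x^i_s$ vanishes at $\hat x$, every partial derivative of the right-hand side at $\hat x$ vanishes except $\partial\dot x^i_s/\partial x^i_s = \exptu_i(s;\hat x_{-i}) - \exptu_i(\hat x) =: -c^i_s$, and $c^i_s > 0$ by quasi-strictness. Hence, ordering the coordinates as (in-face, transverse), $D = \left(\begin{smallmatrix} D' & B \\ 0 & -C\end{smallmatrix}\right)$, where $D'$ is the linearization of the subgame replicator on $X_T$ and $C = \operatorname{diag}(c^i_s) \succ 0$. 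The in-face subspace $E_{\mathrm{face}}$ is $D$-invariant (the face $X_T$ is replicator-invariant), and the map $D$ induces on $\real^{\dim X}/E_{\mathrm{face}}$ is $-C$, whose spectrum is strictly negative. Therefore any $D$-invariant subspace whose spectrum lies in the closed right half-plane — in particular $E^c \oplus E^u$ — must project to $0$ in this quotient, i.e.\ it is contained in $E_{\mathrm{face}}$; consequently the stable subspace $E^s$ surjects onto the quotient, and I can pick $w \in E^s$ all of whose transverse components are strictly positive.

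The last step, which is also the main obstacle, is to invoke the stable manifold theorem in the form valid at \emph{non-hyperbolic} fixed points: $D'$ may have eigenvalues on the imaginary axis (for instance when $\hat x$ is a center of a zero-sum subgame), so the textbook hyperbolic version does not apply, and what one uses instead is that the stable manifold attached to the strictly contracting part of the spectrum exists at any fixed point, whether or not a center manifold is present. This gives a local invariant manifold $W^s_{\mathrm{loc}}(\hat x)$, tangent to $E^s$ at $\hat x$, all of whose points converge to $\hat x$ in forward time. Let $\gamma$ be a $C^1$ curve in $W^s_{\mathrm{loc}}(\hat x)$ with $\gamma(0) = \hat x$ and $\gamma'(0) = w$. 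For small $t > 0$ the transverse coordinates of $\gamma(t)$ are $t\cdot(\text{positive}) + o(t) > 0$, while its in-face coordinates stay near the strictly positive values of $\hat x$, so $\gamma(t) \in \intr X$. Taking $z := \gamma(t)$, its forward orbit converges to $\hat x$, and since $\intr X$ is invariant under the replicator this orbit remains in $\intr X$; hence $\omega(z) = \{\hat x\}$. The remaining ingredients — the block structure of $D$, the quotient argument isolating $E^s$, and the check that $\gamma$ enters $\intr X$ — are routine.
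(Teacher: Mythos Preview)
Your argument is correct and is essentially the paper's own proof: compute the linearization of the replicator at $\hat x$, observe that the transverse eigenvalues are all strictly negative by quasi-strictness, and invoke the stable manifold theorem to produce a point of $\intr X$ on the local stable manifold. You are in fact more careful than the paper on two points---you handle the block-\emph{triangular} (rather than diagonal) structure of the Jacobian via the quotient argument showing $E^c\oplus E^u\subseteq E_{\mathrm{face}}$, and you explicitly allow for a non-hyperbolic in-face block $D'$ by appealing to the general form of the stable manifold theorem---but the underlying strategy is identical.
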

\begin{proof}
    Let $Y$ be the subgame that is the support of $\hat x$, with does not include all strategies because $\hat x$ lies in the boundary of $X$. We begin by constructing the Jacobian of the replicator at $\hat x$. Let $s_i$ be a strategy that is not in the support of $\hat x^i$, so $\hat x^i_{s_i} = 0$. Observe that
    \[
    \partial \dot x^i_{s_i} / \partial x^i_{s_i} = u_i(s_i;x_{-i}) - u_i( x) +  x^i_{s_i}\partial /\partial x^i_{s_i} (u_i(s_i; x_{-i}) - u_i(x))
    \]
    and so at $\hat x$, where $\hat x^i_{s_i} = 0$, we have $\partial \dot x^i_{s_i} / \partial x^i_{s_i} = u_i(s_i;\hat x_{-i}) - u_i(\hat x)$. Similarly, for any other $x^k_\ell$,
    \[
    \partial \dot x^i_{s_i} / \partial x^k_\ell = x^i_{s_i}( \dots)
    \]
    so $\partial \dot x^i_{s_i} / \partial x^k_\ell = 0$ at $\hat x$. Hence the strategy $s_i$ column of the Jacobian of the replicator at $\hat x$ is equal to $(u_i(s_i;\hat x_{-i}) - u_i(\hat x))\mathbf{e}_{s_i}$, where $\mathbf{e}_{s_i}$ is the standard basis vector of the $s_i$ column. We conclude that $\mathbf{e}_{s_i}$ is an eigenvector of the Jacobian at $\hat x$, whose eigenvalue is $u_i(s_i;\hat x_{-i}) - u_i(\hat x)$ --- the relative payoff of $s_i$ at $\hat x$. Because this is a quasi-strict Nash equilibrium, this is strictly negative. These are called the transversal eigenvalues of $\hat x$ \citep{hofbauer1998evolutionary}.

    We now apply the stable manifold theorem. This tells us that, in the neighborhood of $\hat x$, there is an invariant manifold of points which converge to $\hat x$, and this manifold is tangent to the space $E^s$ spanned by the eigenvectors with negative eigenvalues. Because all transversal eigenvalues are negative, there is a direction in the stable manifold of $\hat x$ which points into the interior of the space. Hence there exists a points on the interior of $X$ which converges to $\hat x$.
\end{proof}

\begin{lem}[Lemma~\ref{lem: counter 2}]
Let $u$ be a generic $2\times 3$ game whose preference graph is isomorphic to that in Figure~\ref{fig:2player small}. There exist two fixed points $\hat x$ and $\hat y$ whose support is $2\times 2$, and exactly one of these is a Nash equilibrium. There is a trajectory between these points, beginning at the non-Nash fixed point and ending at the Nash equilibrium.
\end{lem}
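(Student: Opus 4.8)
The statement breaks into three tasks --- locate the two $2\times 2$-supported fixed points $\hat x$ and $\hat y$; show that for a suitable weight choice exactly one is a Nash equilibrium of the full $2\times 3$ game; and construct the heteroclinic orbit from the non-Nash point to the Nash point --- of which only the third is substantial. \textbf{First, the fixed points.} Parametrise player~1's mixed strategy by $p\in[0,1]$ and player~2's by $q=(q_A,q_B,q_C)\in\Delta_3$, so the strategy space is the three-dimensional prism $[0,1]\times\Delta_3$; a fully-mixed fixed point supported on a $2\times 2$ subgame is the (generically unique) solution of the two player-indifference equations there. Reading off Figure~\ref{fig:2player small}, the two $2\times 2$ subgames relevant to the lemma are coordination games --- two pure sinks, two pure sources --- so each contributes a single interior fixed point that is a saddle within its own subgame; these are $\hat x$ and $\hat y$, supported on two column-pairs sharing one column.

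\textbf{Second, the Nash classification.} A fully-mixed fixed point of a $2\times 2$ subgame is automatically a Nash equilibrium of that subgame, so $\hat x$ (resp.\ $\hat y$) is a Nash equilibrium of the whole game precisely when player~2's relative payoff for its single unused column is $\le 0$ there. That relative payoff is a linear functional of the arc weights whose sign is pinned down by the preference graph only at the two pure profiles of the unused column, not at the mixed point. I would check that the arc pattern of Figure~\ref{fig:2player small} admits a choice of weights making this quantity strictly positive at one of the two points --- so that point is not a Nash equilibrium but in fact a local source, and Lemma~\ref{nash interior stable points}, applied to that subgame, yields a direction of its unstable manifold pointing into the interior --- and strictly negative at the other, a quasi-strict Nash equilibrium; genericity rules out the vanishing case. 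Relabel so $\hat x$ is the local source and $\hat y$ the Nash point.

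\textbf{Third, the heteroclinic orbit.} Exactly as in the proof of Lemma~\ref{nash interior stable points}, the unused-column direction at $\hat x$ is a Jacobian eigenvector whose eigenvalue is that column's (now positive) relative payoff; together with the one positive and one negative eigenvalue $\hat x$ carries as a saddle of its coordination subgame, this gives $\hat x$ a two-dimensional unstable manifold $W^u(\hat x)$ reaching into the region where all three columns are played, while $\hat y$, having a negative unused-column eigenvalue, has a two-dimensional stable manifold. To force the two to meet I would run the flow on the invariant surface $W^u(\hat x)$ and argue Poincar\'e--Bendixson-style on its compact closure: the only fixed points it can contain are $\hat x$, the pure profiles of the subgames (repelling or irrelevant for the chosen weights), and $\hat y$, and periodic orbits are excluded because the flow off the three invariant $2\times 2$ faces admits a strict monotone quantity --- a weighted sum of the profile masses $z_p$ of Lemma~\ref{product replicator}. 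Then every trajectory on $W^u(\hat x)\setminus\{\hat x\}$ not trapped in a face has $\omega$-limit $\{\hat y\}$, which is the required orbit. For the very explicit weights one actually picks, the cleaner route is to exhibit a one-dimensional invariant curve joining $\hat x$ and $\hat y$ --- the analogue of the diagonal $x_1=x_2=x_3$ of Section~\ref{sec: 3p counter} --- and verify the induced scalar ODE has constant sign between them, just as $\dot w=w(1-w)(2w-1)^2$ did there.

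\textbf{The main obstacle} is this last step: ensuring $W^u(\hat x)$ limits onto $\hat y$ rather than spiralling onto a periodic orbit, drifting to a corner, or collapsing into an invariant $2\times 2$ face. In the three-player example an exact symmetry supplied a one-dimensional invariant subspace and hence a transparent scalar ODE; here there is no symmetry, so the connection must be secured either through an invariant curve tailored to the chosen weights or through a global absence-of-cycles argument on the two-dimensional unstable manifold --- and it is precisely the remaining freedom in the arc weights, plus genericity, that should make one of these available.
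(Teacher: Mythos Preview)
Your decomposition is sound, and the first two parts are essentially right, though note that the lemma claims \emph{exactly one} of $\hat x,\hat y$ is Nash for \emph{every} generic game with this preference graph, not merely that weights can be chosen to make this so. The paper establishes this by observing that player~2's payoff for each column is affine in the row mix $z$; the indifference points $\hat x^1_r$ and $\hat y^1_r$ are then totally ordered, and the arc orientations force exactly one to be Nash depending on that order. Your argument as written only shows some weights work, which is weaker than stated (though adequate for the downstream counterexample).

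The substantive gap is in your third step. Your Poincar\'e--Bendixson plan on $W^u(\hat x)$ breaks down for a concrete reason: the pure profiles of this game include two \emph{pure Nash equilibria} $s_1,s_2$, which are attractors of the replicator --- not ``repelling or irrelevant'' --- and the closure of $W^u(\hat x)$ certainly contains at least one of them (the $1$-dimensional unstable manifold of $\hat x$ inside its own $2\times 2$ coordination subgame already flows to a sink of that subgame). So most trajectories on $W^u(\hat x)$ will converge to a PNE, and you have given no mechanism that isolates one going to $\hat y$; you have also not shown $\hat y\in\overline{W^u(\hat x)}$ at all, which is the very thing to prove. The ``strict monotone quantity --- a weighted sum of the $z_p$'' is asserted but not exhibited, and Lemma~\ref{product replicator} does not by itself supply one. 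Your alternative route (an explicit invariant curve for tailored weights) is plausible but entirely undeveloped; unlike the three-player case there is no symmetry to hand you one for free.

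The paper's argument is structurally different and sidesteps these issues. It defines the separator $W=X\setminus(\omega^{-1}(s_1)\cup\omega^{-1}(s_2))$, characterises $W\cap\bd X$ subgame-by-subgame, and then invokes a time-average result of Hofbauer (limit sets of time averages of interior orbits are invariant under best-response dynamics) to conclude that any $z\in W\cap\intr X$ has $\omega(z)=\{\hat y\}$, since $\{\hat y\}$ is the only best-response-invariant subset of $W\cap\bd X$. Running the same argument in reverse time on the negated game gives a set $S$ with $\alpha(z)=\{\hat x\}$ for $z\in S\cap\intr X$. The existence of the heteroclinic then reduces to the purely topological fact that $W\cap S\cap\intr X\neq\emptyset$, which follows because every neighbourhood of $\hat x$ in $S\cap\intr X$ meets both $\omega^{-1}(s_1)$ and $\omega^{-1}(s_2)$ and hence their common boundary $W$. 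The key tool you are missing is this best-response invariance of time-average limit sets, which is what pins down $\hat y$ as the unique possible interior $\omega$-limit once the PNE basins are excised.
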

\begin{proof}
(\emph{Claim:} generically, exactly one of $\hat x$ and $\hat y$ is Nash.) Let $r$ be the pure strategy for player 1 which involves playing the top row. Then $\hat x^1_r,\hat y^1_r \in (0,1)$ denote the probability mass on $r$ in $\hat x$ and $\hat y$ respectively. We will show that $\hat x$ is Nash if $\hat x^1_r > \hat y^1_r$, $\hat y$ is Nash if $\hat x^1_r < \hat y^1_r$ and both are Nash in the non-generic case where $\hat x^1_r = \hat y^1_r$.

Let $c_1,c_2,c_3$ be the column strategies, ordered from left to right. Given a fixed mixed strategy $(z,1-z)$ for the row player, the column player receives a payoff for each $c_i$ that is a linear function of $z$. At $z = 0$, the column player prefers their strategies in the best-to-worst order $c_3,c_1,c_2$. At $z = 1$, the column player prefers their strategies in the best-to-worst order $c_2,c_1,c_3$. At $z = \hat x^1_r$, the column player is indifferent between $c_1$ and $c_2$ and at $z = \hat y^1_r$, they are indifferent between $c_2$ and $c_3$. If $\hat x^1_r < \hat y^1_r$, then at $z = \hat x^1_r$ the column player must prefer $c_3$ over $c_1$ and $c_2$, so $\hat x$ is not a Nash equilibrium. By contrast, at $z = \hat y^1_r$, the column player prefers both $c_2$ and $c_3$ (and hence $\hat y^2$) over $c_1$, so $\hat y$ is a Nash equilibrium. The opposite is true when $\hat x^1_r > \hat y^1_r$. If $\hat x^1_r = \hat y^1_r$, then at this point the column player is indifferent between $c_1$, $c_2$ and $c_3$, resulting in a continuum of Nash equilibria. This completes the claim.

\begin{figure}
    \centering
    \includestandalone{figs/2player_proof}
    \caption{The game from Figure~\ref{fig:2player small}, with the set $W$ highlighted as in the proof of Lemma~\ref{lem: counter 2}.}
    \label{fig:proof explanation}
\end{figure}
(\emph{Claim:} there exists a point $z$ where $\alpha(z) = \hat x$ and $\omega(z) = \hat y$.) In Figure~\ref{fig:proof explanation} we show a more detailed depiction of this game for the purposes of the proof. The game has two sources and two sinks, which we name $r_1$, $r_2$, $s_1$ and $s_2$ respectively. The remaining two profiles are named $a$ and $d$. The sinks are PNEs, both of which are attractors of the replicator. Further, because every attractor must contain a sink equilibrium \citep{biggar_replicator_2023}, these are the only attractors.

Let $\omega^{-1}(s_1)$ and $\omega^{-1}(s_2)$ be the sets of points $z$ whose limit set $\omega(z)$ is $\{s_1\}$ or $\{s_2\}$ respectively. These are disjoint invariant sets, and they are open by the continuity of the replicator dynamic. Hence there exists a closed set $W$ of points whose limit sets are neither $\{s_1\}$ nor $\{s_2\}$. Note that $W$ is also an invariant set. 
Because each proper subgame of this game is two-dimensional, we can completely characterize the intersection of $W$ and the boundary $\bd X$ of the strategy space by considering each subgame in turn. The result is the set depicted in Figure~\ref{fig:proof explanation}. It contains all fixed points except the sinks $s_1$ and $s_2$, as well as the one-dimensional trajectories joining them, and a two-dimensional region in the subgame spanned by $a$, $\hat x$, $r_1$ and $d$, and the trajectories between them. 

By Proposition~5.1 of \cite{hofbauer_time_2009}, the limit set of the time average of an interior point $z$ is a closed subset of $X$ that is both invariant under \emph{best-response dynamics}. For $z\in \intr X \cap W$, $\omega(z)$ is on the boundary \citep{hofbauer1998evolutionary}, and so must additionally be a subset of $W \cap \bd X$. The only subset of this set that is invariant under best-response dynamics is the Nash equilibrium set $\{\hat y\}$. Hence for any $z\in W\cap \intr X$, $\omega(z) = \{\hat y\}$.

Now we repeat the steps above, but with the role of forward and backward limit sets reversed. Let $S$ be the set of points $z$ in $X$ whose \emph{backward} limit set $\alpha(z)$ is neither $\{r_1\}$ nor $\{r_2\}$. Again we can characterize the set $S\cap \bd X$, which has an analogous structure to $W$. The flow of the replicator in reverse time is equal to the flow of the replicator on the negated game $-u$, so we can apply the same arguments about $\omega$-limits to $\alpha$-limits. We conclude, again, that for every $z\in S\cap \intr X$, $\alpha(z) = \{\hat x\}$.

To complete the argument, we just need to show that $S\cap W\cap \intr X$ is not empty. For this, observe that if $z\in \bd X \cap \omega^{-1}(s_1)$, then because $\omega^{-1}(s_1)$ is open there is an $\epsilon > 0$ such that all points within $\epsilon$ of $z$ are in $\omega^{-1}(s_1)$. In particular, there are points near $z$ in $\intr X \cap \omega^{-1}(s_1)$.

Now consider a neighborhood of $\hat x$ in $S\cap \intr X$. $\hat x$ lies on the boundary of $\omega^{-1}(s_1)$ and $\omega^{-1}(s_2)$, so every neighborhood in $S$ contains points in both of these sets. Similarly, every neighborhood in $S\cap \intr X$ must contain points in both of these sets, by the argument in the previous paragraph. Finally, every neighborhood of $\hat x$ in $S\cap \intr X$ must therefore contain points in the boundary of $\omega^{-1}(s_1)$ and $\omega^{-1}(s_1)$, which is entirely contained in $W$. Hence $S\cap W\cap \intr X$ is nonempty.
\end{proof}

\begin{lem}[Lemma~\ref{ZS are pseudoconvex}]
     Two-player zero-sum games are pseudoconvex.
\end{lem}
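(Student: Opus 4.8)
The plan is a direct computation on an arbitrary cavity. Fix a sink equilibrium $H$ of a two-player zero-sum game $u$ and a cavity $Y$ of $H$. By Definitions~\ref{def:corner} and~\ref{def: pseudoconvex cavity}, write $Y=\{\alpha,\gamma\}\times\{\beta,\delta\}$ with the three profiles $w=(\alpha,\beta)$, $(\gamma,\beta)$, $(\alpha,\delta)$ all in $H$ and $x=(\gamma,\delta)\notin H$. Let $A$ denote player~1's payoff matrix restricted to $Y$; by the zero-sum hypothesis, player~2's matrix on $Y$ is $-A$, and I abbreviate entries as $A_{\alpha\beta}:=u_1(\alpha,\beta)$, and so on.

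The first step is to unfold the pseudoconvexity quantity. The pair $(\gamma,\beta),w$ is $1$-comparable and the pair $(\alpha,\delta),w$ is $2$-comparable, so using $u_2=-u_1$,
\[
W_{(\gamma,\beta),(\alpha,\beta)}+W_{(\alpha,\delta),(\alpha,\beta)}=\bigl(A_{\gamma\beta}-A_{\alpha\beta}\bigr)+\bigl(A_{\alpha\beta}-A_{\alpha\delta}\bigr)=A_{\gamma\beta}-A_{\alpha\delta}.
\]
Thus pseudoconvexity of $Y$ is equivalent to the single inequality $A_{\gamma\beta}<A_{\alpha\delta}$. This zero-sum cancellation — the two-term sum telescoping to one payoff difference — is the only slightly non-obvious move.

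The second step is to read off the arc directions at $x$. Since $x=(\gamma,\delta)\notin H$ while its two neighbours $(\gamma,\beta)$ and $(\alpha,\delta)$ in $Y$ lie in $H$, and $H$ is a sink strongly connected component of the preference graph (so no arc leaves it), both arcs incident to $x$ inside $Y$ must be directed from $x$ into $H$ — exactly the configuration of Figure~\ref{fig:corners}. The arc $(\gamma,\delta)\to(\gamma,\beta)$ is a player-2 deviation, hence $u_2(\gamma,\beta)>u_2(\gamma,\delta)$, i.e.\ $A_{\gamma\beta}<A_{\gamma\delta}$; the arc $(\gamma,\delta)\to(\alpha,\delta)$ is a player-1 deviation, hence $u_1(\alpha,\delta)>u_1(\gamma,\delta)$, i.e.\ $A_{\gamma\delta}<A_{\alpha\delta}$. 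Chaining these gives $A_{\gamma\beta}<A_{\gamma\delta}<A_{\alpha\delta}$, so $A_{\gamma\beta}<A_{\alpha\delta}$, which is precisely pseudoconvexity of $Y$. Since $H$ and $Y$ were arbitrary, every cavity of every sink equilibrium of $u$ is pseudoconvex, so $u$ is pseudoconvex.

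There is no real obstacle here beyond bookkeeping: keeping straight which player each $W$-weight refers to, applying the $u_2=-u_1$ sign flip correctly, and noticing that the two inward-pointing arcs at $x$ — forced purely by $H$ being a sink component — are exactly what controls the sign of the collapsed quantity $A_{\gamma\beta}-A_{\alpha\delta}$. (As elsewhere in the paper we take payoffs generic, so these arcs are strict inequalities.)
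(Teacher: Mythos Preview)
Your proof is correct and is essentially the same argument as the paper's: both use the two inward arcs at the outside node $x=(\gamma,\delta)$ together with the zero-sum cancellation to force $W_{(\gamma,\beta),(\alpha,\beta)}+W_{(\alpha,\delta),(\alpha,\beta)}<0$. The only difference is cosmetic: the paper packages the telescoping identity $W_{(\gamma,\beta),(\gamma,\delta)}+W_{(\alpha,\delta),(\gamma,\delta)}=-(W_{(\gamma,\beta),(\alpha,\beta)}+W_{(\alpha,\delta),(\alpha,\beta)})$ as a citation to an external lemma, whereas you compute it inline as $A_{\gamma\beta}-A_{\alpha\delta}$, making your version self-contained.
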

\begin{proof}
     Let $w = (\alpha,\beta)$ be a profile in a sink equilibrium $H$, and $x = (\gamma,\delta)$ be another profile outside $H$ such that the subgame $\{\alpha,\gamma\}\times\{\beta,\delta\}$ is a cavity of $H$. Because $x\not\in H$, there are arcs $\arc{(\gamma,\delta)}{(\gamma,\beta)}$ and $\arc{(\gamma,\delta)}{(\alpha, \delta)}$ and so $W_{(\gamma,\beta),(\gamma,\delta)} > 0$ and $W_{(\alpha, \delta),(\gamma,\delta)} > 0$. Lemma~4.7 of \cite{biggar_attractor_2024} establishes that $W_{(\gamma,\beta),(\gamma,\delta)}  + W_{(\alpha, \delta),(\gamma,\delta)} =  -(W_{(\gamma,\beta),(\alpha, \beta)} + W_{(\alpha,\delta),(\alpha, \beta)})$, so $W_{(\gamma,\beta),(\alpha,\beta)} + W_{(\alpha,\delta),(\alpha, \beta)} < 0$, thus this cavity is pseudoconvex.
\end{proof}

\begin{lem}[Lemma~\ref{product replicator}]
Let $u$ be a $N$-player $m_1\times m_2 \times \dots\times m_N$ game. The \emph{product matrix} of $u$ is the following matrix $M\in \real^{(\prod_i m_i)\times (\prod_i m_i)}$, indexed by profiles $p$ and $q$ in $Z$:
\begin{equation*}
M_{q,p} = \sum_{i=1}^N \left ( u_i(q_i;p_{-i}) - u_i(p) \right ) = \sum_{i=1}^N W_{(q_i;p_{-i}),p}
\end{equation*}
Next, let be $p = (s_1,s_2,\dots,s_N)$ be a pure profile. Given a mixed profile $x$, the product distribution $z_p := \prod_{i=1}^N x^i_{s_i}$ defines a distribution over profiles. Then, under the replicator dynamic:
\[ \dot z_p = z_p (M z)_p \]
\end{lem}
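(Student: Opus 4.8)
The plan is a direct computation in two moves: differentiate the product distribution with the product rule, and then recognize the resulting expression as $(Mz)_p$ using the fact that $z$ is a product (hence independent) distribution.

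First I would write $z_p = \prod_{i=1}^N x^i_{s_i}$ and apply the product rule, giving $\dot z_p = \sum_{i=1}^N \dot x^i_{s_i}\prod_{j\neq i}x^j_{s_j}$. Substituting the replicator equation in the form $\dot x^i_{s_i} = x^i_{s_i}\big(\exptu_i(s_i;x_{-i}) - \exptu_i(x)\big)$, where $\exptu_i(x) = \sum_{r\in S_i}x^i_r\,\exptu_i(r;x_{-i})$ is player $i$'s expected payoff at $x$, each summand acquires the common factor $x^i_{s_i}\prod_{j\neq i}x^j_{s_j} = \prod_j x^j_{s_j} = z_p$, so the whole expression collapses to
\[
\dot z_p = z_p\sum_{i=1}^N\big(\exptu_i(s_i;x_{-i}) - \exptu_i(x)\big).
\]
This step involves no division by $z_p$, so the identity holds verbatim on the boundary of $X$ where $z_p$ may vanish.

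It then remains to identify the bracketed sum with $(Mz)_p$. Expanding the matrix--vector product, $(Mz)_p = \sum_{q\in Z} M_{p,q}\,z_q = \sum_{q\in Z} z_q\sum_{i=1}^N\big(u_i(s_i;q_{-i}) - u_i(q)\big)$ (reading off the $(p,q)$ entry from the definition of $M$ and using $p_i=s_i$), I would swap the order of summation and use the key structural fact that $z$ is a product distribution, so its marginal over any subset of coordinates factorizes: $\sum_{q\in Z}z_q = 1$, and for each $i$, summing out $q_i$ gives $\sum_{q}z_q\,u_i(s_i;q_{-i}) = \sum_{q_{-i}}\big(\prod_{j\neq i}x^j_{q_j}\big)u_i(s_i;q_{-i}) = \exptu_i(s_i;x_{-i})$, while $\sum_{q}z_q\,u_i(q) = \exptu_i(x)$. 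Summing over $i$ recovers exactly $\sum_{i}(\exptu_i(s_i;x_{-i}) - \exptu_i(x))$, and hence $\dot z_p = z_p(Mz)_p$.

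There is no genuinely hard step: this is elementary multilinear algebra. The only points requiring care are keeping the index convention of the matrix--vector product straight --- $(Mz)_p$ must pair player $i$'s strategy $p_i$ with the \emph{other} players' strategies $q_{-i}$ drawn from the summed profile $q$, matching the $W_{(q_i;p_{-i}),p}$ reading of $M$ --- and invoking the marginalization identity for product distributions, which is precisely what allows the pure-profile utilities appearing in $M$ to reproduce the mixed expected payoffs $\exptu_i(s_i;x_{-i})$ and $\exptu_i(x)$ that drive the replicator. I expect the lemma's value to lie in this clean reformulation rather than in any difficulty of its proof.
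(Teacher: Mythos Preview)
Your proposal is correct and follows essentially the same approach as the paper: apply the product rule to $z_p$, substitute the replicator equation, and identify the resulting sum with $(Mz)_p$ via the marginalization identity for product distributions. The only cosmetic difference is the order of operations---the paper first rewrites $\dot x^i_s/x^i_s$ as $\sum_{p\in Z} z_p\,(u_i(s;p_{-i}) - u_i(p))$ (citing an auxiliary lemma) and then applies the product rule, whereas you apply the product rule first to obtain expected utilities and then expand $(Mz)_p$ separately; both routes hinge on the same marginalization step.
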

\begin{proof}
By Lemma~A.1 of \cite{biggar_replicator_2023}, the replicator equation can be written
\begin{align*}
    \dot x^i_s &= x_s^i\sum_{r\in S_i} x_r^i \sum_{p_{-i} \in Z_{-i}} z_{p_{-i}} \left (u_i(s;p_{-i}) - u_i(r;p_{-i}) \right ) \\
    &= x_s^i\sum_{p_{-i} \in Z_{-i}} \sum_{r\in S_i} x_r^iz_{p_{-i}}\left (u_i(s;p_{-i}) - u_i(r;p_{-i}) \right ) \\
    &= x_s^i\sum_{p_{-i} \in Z_{-i}} \sum_{r\in S_i} z_{(r;p_{-i})} \left (u_i(s;p_{-i}) - u_i(r;p_{-i}) \right ) \\
    &= x_s^i\sum_{p\in Z} z_p \left (u_i(s;p_{-i}) - u_i(p) \right)
\end{align*}

Now we observe that for $q = (q_1,q_2,\dots,q_N)$,
\begin{align*}
    \dot z_q &= \ddt (\prod_{i=1}^N x^i_{q_i})
    = z_q \sum_{i=1}^N \frac{\dot x^i_{q_i}}{x^i_{q_i}} \quad \text{(product rule)} \\
    &= z_q \sum_{i=1}^N  \sum_{p\in Z} z_p \left (u_i(q_i;p_{-i}) - u_i(p) \right) \quad \text{(by above)}\\
    &= z_q \sum_{p\in Z} z_p \sum_{i=1}^N \left ( u_i(q_i;\bar p) - u_i(p) \right) 
    =  z_q \sum_{p\in Z} z_p M_{q,p} \\
    &=  z_q (M z)_q
\end{align*}
\end{proof}

\begin{thm}[Theorem~\ref{weak source stability}]
    If a sink equilibrium of a two-player game is pseudoconvex, then its content is an attractor of the replicator.
\end{thm}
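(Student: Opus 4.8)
The plan is to follow the Lyapunov-function strategy sketched in the main text: work in correlated space using Lemma~\ref{product replicator}, take as the candidate Lyapunov function the total mass $z_H := \sum_{h\in H} z_h$ on the profiles of the sink equilibrium $H$, and show that $z_H$ is strictly increasing on a punctured neighborhood of the level set $\{z_H = 1\} = \content(H)$ (except where $z$ is already a fixed point in $\content(H)$). Since $\content(H)$ is compact and invariant (it is a union of subgames, each invariant by subgame-independence), the existence of such a Lyapunov function immediately yields that $\content(H)$ is an attracting set; minimality — that it is an attractor, not just an attracting set — follows because every attractor must contain a sink equilibrium and $\content(H)$ is the content of a single sink SCC, so no proper closed invariant subset can itself be attracting. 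The real work is the monotonicity estimate for $\dot z_H$ near $z_H = 1$.

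\medskip

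\noindent\textbf{Computing $\dot z_H$.} By linearity and Lemma~\ref{product replicator}, $\dot z_H = \sum_{q\in H} z_q (Mz)_q = \sum_{q\in H}\sum_{p\in Z} z_q z_p M_{q,p}$. Now expand $M_{q,p} = \sum_i W_{(q_i;p_{-i}),p}$; each summand is the weight of a preference-graph arc between two comparable profiles. Group the double sum by these arcs. Terms with both endpoints in $H$ cancel: if $p,q\in H$ are $i$-comparable, the pair $(p,q)$ contributes $z_pz_q W_{q,p}$ and the pair $(q,p)$ contributes $z_pz_q W_{p,q} = -z_pz_q W_{q,p}$, so they sum to zero. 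Terms with both endpoints outside $H$ do not appear in $\dot z_H$ at all (the outer sum is over $q\in H$, and $W_{(q_i;p_{-i}),p}$ with $q\in H$ forces at least one of the two profiles $p, (q_i;p_{-i})$... — more carefully, one must track that each nonzero contribution involves a profile $q\in H$ in the outer index). So the surviving terms correspond to arcs of the preference graph with exactly one endpoint in $H$: for each such arc we get $\pm z_p z_q W$ with the sign determined by whether the arc points into or out of $H$. Near $z_H = 1$, the mass outside $H$ is $O(\varepsilon)$ where $\varepsilon = 1 - z_H$, so only terms linear in the outside mass matter to leading order; quadratic-in-$\varepsilon$ terms are negligible.

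\medskip

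\noindent\textbf{The $2\times 2$ grouping and the case analysis.} The leading-order contribution decomposes into a sum over $2\times 2$ subgames $Y$ that intersect $H$ in the relevant way (each outside profile $x\notin H$ together with its neighbours determines such subgames). For a $2\times 2$ subgame that is \emph{not} a cavity — i.e.\ all four or only two of its profiles lie in $H$, or it meets $H$ in a way that makes the local flow point into $H$ — the contribution to $\dot z_H$ is nonnegative. For a cavity, we are in one of the three cases of Figure~\ref{fig:corners}: type~\ref{fig:corner 2} contributes a strictly positive term (local sink), type~\ref{fig:corner 1} contributes a term with the sign of $b - a$, and type~\ref{fig:corner 3} (local source) contributes a strictly negative term. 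Pseudoconvexity is exactly the hypothesis $W_{(\gamma,\beta),(\alpha,\beta)} + W_{(\alpha,\delta),(\alpha,\beta)} < 0$, which rules out type~\ref{fig:corner 3} and forces $a \le b$ in type~\ref{fig:corner 1}; hence every cavity contributes a nonnegative leading-order term. Summing over all $2\times 2$ subgames, $\dot z_H \ge 0$ to leading order in $\varepsilon$, with strict positivity unless every outside profile carries zero mass — i.e.\ unless $z\in\content(H)$. A compactness argument on the slice $\{z : \varepsilon_0/2 \le 1 - z_H \le \varepsilon_0\}$ upgrades this to a uniform strictly positive lower bound on $\dot z_H$ there, which is what is needed to conclude $\content(H)$ is attracting.

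\medskip

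\noindent\textbf{Main obstacle.} The delicate point is the bookkeeping in the $2\times 2$ grouping: one must verify that every leading-order term in $\dot z_H$ can indeed be assigned to a $2\times 2$ subgame in such a way that (i) the assignment is well-defined and does not double-count, (ii) the non-cavity subgames genuinely contribute $\ge 0$ after accounting for which profiles of the subgame lie in $H$ and in which directions the arcs point, and (iii) the only source of a negative leading-order term is a non-pseudoconvex cavity. This requires a careful enumeration of how a $2\times 2$ subgame can intersect $H$ (how many profiles in $H$, and — when three are in $H$ — which of the three configurations of Figure~\ref{fig:corners}), together with the observation, used to place the $W$'s correctly, that an outside profile $x\notin H$ has both of its $2\times 2$-subgame arcs directed away from $x$. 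The restriction to two players is what makes this enumeration finite and tractable: with more players a profile outside $H$ sits in many more $2\times 2$ subgames and the sign analysis no longer localizes cleanly, which is why the theorem is stated only for $N = 2$.
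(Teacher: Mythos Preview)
Your overall strategy is exactly the paper's: use $z_H=\sum_{h\in H}z_h$ as a Lyapunov function, compute $\dot z_H$ via the product matrix, group into $2\times 2$ subgames, and show pseudoconvexity kills the only possible negative leading-order contribution. Two points in your execution do not match the paper and, as written, leave genuine gaps.

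\medskip
\textbf{The ``group by arcs'' step does not work.} When you expand $M_{q,p}=\sum_i W_{(q_i;p_{-i}),p}$ and try to collect the double sum arc-by-arc, the coefficient multiplying a fixed arc weight $W_{s,r}$ is \emph{not} $z_rz_s$: it is $z_r$ times the total $H$-mass along the row (or column) determined by $s$, because many $q\in H$ share the relevant coordinate. So the clean picture ``surviving terms are arcs with exactly one endpoint in $H$, each weighted by $\pm z_pz_qW$'' is false, and your own hesitation (``more carefully, one must track\dots'') is warranted. The paper avoids this entirely: it separates the sum into \emph{comparable} pairs $(h,p)$ (where $M_{h,p}=W_{h,p}$ and the in-$H$ cancellation you describe is valid) and \emph{non-comparable} pairs, and for the latter uses the identity $z_hz_p=z_az_b$ (where $a,b$ are the other two corners of the $2\times2$ subgame spanned by $h,p$) to merge all $M$-terms with the same $z$-coefficient. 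That identity is the engine of the $2\times2$ grouping and you never state it.

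\medskip
\textbf{The case enumeration is incomplete, and strict positivity is not established.} The paper's cases are ``how many of the four corners lie in $H$'': four (cancels), three (cavity; combined term is $-M_{p,h}$, nonnegative exactly under pseudoconvexity), two (always nonnegative), and \emph{one}. This last case is missing from your list, and it is precisely where the possibly negative terms live; they are $O(\varepsilon^2)$ because $z_hz_p=z_az_b$ with $a,b\notin H$. Your blanket claim that non-cavity subgames contribute $\ge 0$ is therefore wrong at order $\varepsilon^2$. Finally, ``strict positivity unless every outside profile carries zero mass'' is not automatic: pseudoconvexity allows the cavity inequality to be an equality, so some leading terms can vanish. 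The paper secures a strictly positive $O(\varepsilon)$ lower bound by fixing an $h\in H$ with $z_h\ge(1-\varepsilon)/|H|$ and exhibiting enough strictly positive comparable-arc terms incident to it to dominate the $O(\varepsilon^2)$ negatives.
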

\begin{proof}
    Let $H$ be a sink equilibrium, which we assume is pseudoconvex. We will show it is asymptotically stable by a Lyapunov argument, using a similar structure to Theorem~4.3 of \cite{biggar_attractor_2024}. First, we define $z_H := \sum_{h\in H} z_h$. That is, $z_H$ is the cumulative total distributed over the profiles in $H$ in the product distribution $z$. Note that $z_H = 1$ if and only if $x \in \content(H)$. $z_H$ is uniformly continuous, and so to prove that $\content(H)$ is an attractor it is sufficient to show that $\dot z_H > 0$ in some neighborhood of $\content(H)$. Now fix some small $1> \epsilon > 0$. We will assume that $z_H = 1 - \epsilon$, and we will show that for small enough $\epsilon$, $\dot z_H > 0$.

    From Lemma~\ref{product replicator} we have that $\dot z_H = \sum_{h\in H} z_h(M z)_h = \sum_{p\in Z}\sum_{h\in H} z_p z_h M_{h,p}$. Each term in this sum corresponds to a pair of profiles $p$ and $h$ with $h\in H$. First, we divide this sum into comparable and non-comparable pairs of profiles:
    \[
    \dot z_H  = \sum_{p,h\in H \text{ comparable}} z_p z_h M_{h,p} + \sum_{p,h\in H \text{ not comparable}} z_p z_h M_{h,p}
    \]
    If $p$ and $h$ are comparable and $p\in H$, then the first sum contains the terms $z_pz_h M_{p,h}$ and $z_pz_h M_{h,p}$. Because they are comparable, $M_{p,h} = u_i(p) - u_i(h) = W_{p,h} = - W_{h,p} = - M_{h,p}$ (by equation~\eqref{product matrix}), so these terms cancel. The sum becomes
    \[
    \dot x_H  = \sum_{p\not\in H,h\in H \text{ comparable}} z_p z_h M_{h,p} + \sum_{p,h\in H \text{ not comparable}} z_p z_h M_{h,p}
    \]
    If $p$ and $h$ are comparable, and $p\not\in H$ and $h\in H$, $M_{h,p} > 0$ because the arc $\arc{p}{h}$ is necessarily directed into $h$ ($H$ is a sink component). Hence all terms in the first sum are non-negative.
    
    Now consider the case where $p$ and $h$ are non-comparable. Suppose $p = (\alpha,\beta)$ and $h= (\gamma,\delta)$ where none of these are equal, because $p$ and $h$ are not comparable. Let $a := (\alpha,\delta)$ and $b := (\gamma,\beta)$ be the other two profiles in this $2\times 2$ subgame. First observe that $z_p z_h = x_\alpha y_\beta x_\gamma y_\delta = z_a z_b$. So, for instance, if $z_pz_hM_{h,p}$ and $z_az_bM_{a,b}$ are both in the sum, we can combine them into a single term $z_pz_h(M_{h,p} + M_{a,b})$. We will now group all such terms with common $z$-coefficients. Which of these pairs actually appear in the sum depends on which of $p,a,b,h$ are in $H$. We know $h\in H$, so the remaining cases are:
    \begin{enumerate}
        \item All of $p,a,b,h$ are in $H$: The sum contains all of the terms $M_{p,h} + M_{h,p} + M_{a,b} + M_{b,a}$. Expanding this by Definition~\ref{product matrix} gives $W_{a,p} + W_{b,p} + W_{a,h} + W_{b,h} + W_{p,b} + W_{h,b} + W_{h,a} + W_{p,a} = 0$, because each term $W_{i,j} = -W_{j,i}$. Hence if all are in $H$, these terms cancel in the sum.
        \item Three are in $H$ --- we assume w.l.o.g. that $p\not\in H$ and $a,b,h\in H$. This is a cavity of $H$. Then the terms in the sum are $M_{h,p} + M_{a,b} + M_{b,a}$. By the same argument as above, $M_{h,p} + M_{a,b} + M_{b,a} = -M_{p,h}$, which is equal to $M_{p,h} = W_{a,h} + W_{b,h}$. Hence, $M_{p,h}$ is positive \emph{if and only if} this cavity is pseudoconvex. By assumption, all cavities are pseudoconvex, so $-W_{p,h} = M_{h,p} + M_{a,b} + M_{b,a}$ is non-negative.
        \item Two are in $H$---assume w.l.o.g. that $p,a\not\in H$ and $b,h\in H$. The sum therefore contains the terms $M_{h,p}$ and $M_{b,a}$. By Definition~\ref{product matrix}, $M_{h,p} + M_{b,a} = W_{p,a} + W_{h,a} + W_{a,p} + W_{b,p} = W_{h,a} + W_{b,p}$. Since $b$ and $h$ are in $H$ and $q$ and $b$ are not, the arcs $\arc{a}{h}$ and $\arc{p}{b}$ must be directed into the component, so $W_{h,a} > 0$ and $W_{b,p} > 0$. Hence this term is also non-negative.
        \item Only $h$ is in $H$. This last case is the most difficult, because here the terms can be negative. Each such pair contains only the term $z_pz_h M_{h,p}$. However, note that because $a,b \not \in  H$, $z_a,z_b <\epsilon$. Hence $z_pz_h = z_az_b < \epsilon^2$. Thus all negative terms in $\dot x_H$ have coefficient at most $\epsilon^2$.
    \end{enumerate}

    We have now grouped the terms in this sum by their distinct $z$-coefficients, so each term has the form $x_\alpha y_\beta x_\gamma y_\delta(M_{i,j} + \dots)$. For simplicity, we write $K_{\alpha,\beta,\gamma,\delta} := \sum M_{i,j}$ where $z_iz_j = x_\alpha y_\beta x_\gamma y_\delta$. We now define $\mu := \min \{ K_{\alpha,\beta,\gamma,\delta}\ :\ K_{\alpha,\beta,\gamma,\delta} > 0\}$ and similarly $m := \max \{ |K_{\alpha,\beta,\gamma,\delta}|\ :\ K_{\alpha,\beta,\gamma,\delta} < 0\}$. These are the smallest and largest positive and negative terms respectively. There are at most $N^2$ terms in this sum, where $N$ is the number of profiles. By the above, each negative term has coefficient at most $\epsilon^2$, so the total sum of negative terms is at most $-mN^2\epsilon^2$. 
     Now select an $h\in H$ where $z_h \geq (1 - \epsilon)/|H|$. Since $z_H = 1-\epsilon$, such a node must exist. Then:
     \[
    \dot x_H \geq \sum_{p\not\in H,\ z_pz_h = x_\alpha y_\beta x_\gamma y_\delta,\ K_{\alpha,\beta,\gamma,\delta}>0} z_pz_h \mu - mN^2\epsilon^2 = \mu z_h \sum_{p\not\in H,\ z_pz_h = x_\alpha y_\beta x_\gamma y_\delta,\ K_{\alpha,\beta,\gamma,\delta}>0} z_p - mN^2\epsilon^2
    \]
    This inequality holds because we have retained the contribution from all negative terms (in the $mN^2\epsilon^2$ term), and reduced the set of positive terms. Specifically we have included terms where the coefficient equals $z_pz_h$ for our fixed $h$ and where $p\not\in H$. Now we must determine the sum $\sum z_p$ over these $p$. The total sum over $z_p$ with $p\not\in H$ is $\epsilon$, but some $z_p$ are not included in this sum, if they correspond to a negative term $z_wz_h K_{\alpha,\beta,\gamma,\delta} < 0$. However, by the argument above, this occurs only in case (4). There, the remaining profiles $a$ and $b$ are not in $H$, and so contribute two positive terms $z_az_h W_{h,a}$ and $z_bz_h W_{h,b}$ to this sum. By this argument, at least 2/3 of the terms in this sum must be positive. Also, each such $z_p$ has $z_p <\epsilon^2$. We obtain 
    \begin{align*}
        \dot z_H
        &\geq  k ((1-\epsilon)/|H|)(\epsilon - (1 - |H|)\epsilon^2/3) - mN^2\epsilon^2 \\
        &\geq k\epsilon/|H| - o(\epsilon^2)
    \end{align*}
    Thus, for small enough $\epsilon>0$, this term is strictly positive. 
\end{proof}

\end{document}